\newcommand{%
  \immediate\write18{texcount -1 -sum -merge -q .tex output.bbl > -words.sum }%
  \input{-words.sum} words%
}[1]{%
  \immediate\write18{texcount -1 -sum -merge -q #1.tex output.bbl > #1-words.sum }%
  \input{#1-words.sum} words%
}
\newcommand{%
  \immediate\write18{texcount -1 -sum -merge -char -q .tex output.bbl > -chars.sum }%
  \input{-chars.sum} characters (not including spaces)%
}[1]{%
  \immediate\write18{texcount -1 -sum -merge -char -q #1.tex output.bbl > #1-chars.sum }%
  \input{#1-chars.sum} characters (not including spaces)%
}
\theoremstyle{thmstyleone}%
\newtheorem{theorem}{Theorem}%
\newtheorem{lemma}[theorem]{Lemma}%
\theoremstyle{thmstyletwo}%
\theoremstyle{thmstylethree}%
\newtheorem{definition}{Definition}%
\crefname{equation}{Eq.}{Eqs.}
\crefname{figure}{Fig.}{Figs.}
\crefname{table}{Table}{Tables} 
\crefname{section}{Section}{Sections}
\crefname{chapter}{Chapter}{Chapters}
\crefname{appendix}{Supplementary Note}{Supplementary Notes}
\crefname{algorithm}{Algorithm}{Algorithms}
\crefname{theorem}{Theorem}{Theorems}
\crefname{defn}{Definition}{Definitions}
\crefname{definition}{Definition}{Definitions}
\crefname{azm}{Assumption}{Assumptions}
\crefname{corollary}{Corollary}{Corollaries}
\crefname{lemma}{Lemma}{Lemmas}
\crefname{thmprop}{property}{properties}
\crefname{proposition}{Proposition}{Propositions}
\crefname{remark}{Remark}{Remarks}
\begin{document}

\preprint{APS/123-QED}

\title{Characterizing noisy quantum computation with imperfectly addressed errors}

\author{Riddhi S. Gupta}
\email{riddhi.gupta@uq.edu.au}
\affiliation{School of Mathematics and Physics, The University of Queensland Brisbane QLD 4072 Australia \relax}
\author{Salini Karuvade}
\affiliation{School of Physics, The University of Sydney, NSW 2006, Australia}
\author{Kerstin Beer}
\affiliation{School of Mathematical and Physical Sciences, Macquarie University, Sydney, New South Wales 2109, Australia}
\author{Laura J. Henderson}
\affiliation{School of Mathematics and Physics, The University of Queensland Brisbane QLD 4072 Australia}
\author{Sally Shrapnel}
\affiliation{School of Mathematics and Physics, The University of Queensland Brisbane QLD 4072 Australia}

\date{\today}

\begin{abstract}
 Quantum protocols on hardware are subject to noise that prohibits performance. Protocols for addressing errors, such as error correction or error mitigation, may fail to combat errors in quantum computation if noise violates critical assumptions required for these protocols to be effective. However, tools for characterizing such failures in realistic operating conditions are limited. For example, while brute force simulations may be used to characterize the impact of such failures on a handful of input states, such simulations lack a complete description for how noise transforms state-spaces in the full quantum Hilbert space. In this work, we associate quantum computation subject to realistic noise to an ensemble of random superoperators and study the eigen- and singular spectral distributions over this ensemble. We propose a new theoretical framework to characterize singular values of random complex matrices using matrix Chernoff concentration. Using our framework, we analyze imperfectly addressed errors in error mitigation and error correction. We find that distributions of singular spectra depend on how noise violates critical assumptions of these protocols. Finally, we quantitatively discuss how our work may be applied to understanding limiting behavior of quantum computation, such as establishing spectral gaps and relaxation times for specific families of quantum Markov processes. Our work paves the way for new tools to diagnose when to trust the output of noisy quantum computers.
\end{abstract}

\maketitle

\section{\label{sec:intro} Introduction}

While quantum computing experiments continue to demonstrate remarkable progress in underlying hardware capabilities, combating the effects of noise on computation is critical to trusting the output of quantum hardware. Two dominant strategies for addressing noise arise as quantum error correction \cite{Gottesman1997May} or quantum error mitigation \cite{Temme2017Nov,vandenBerg2023Aug}. While error correction detects and corrects local errors in quantum states under operating regimes with sufficiently low-error, error mitigation yields unbiased estimators of noise-free expectation values while tolerating higher noise levels. Indeed both error correction and mitigation protocols assume that intrinsic noise has certain properties that may not hold true under realistic operating conditions. Examples include, e.g., that intrinsic noise is Markov; that intrinsic noise is sufficiently weak; or that intrinsic noise has restricted impact or is restricted to certain types of operations. Indeed quantum protocols for addressing noise may themselves be subject to pathological conditions which violate critical assumptions required for protocols to be effective. Tools for simulating quantum computations under these kinds of realistic operating conditions are limited.

One typical approach for simulating noisy quantum computation is to use channel formalism represented by superoperators, i.e. complex matrices, acting on vectorized quantum states \cite{watrous_theory_2018}. Previous literature has established distance metrics to compare ideal versus noisy quantum channels \cite{Gilchrist2005Jun} using induced Schatten norms that depend on singular values or the numerical ranges of operators \cite{Watrous2004Nov,Johnston2007Nov,Rastegin2009May,Chelstowski2023Aug}. Aside from such distance metrics, it is also well recognized that channel superoperators may define transition matrices for Markov chains on non-commutative probability spaces \cite{temme_2-divergence_2010,wolf_tour_2012}. Existing work on quantum Markov processes establishes the limiting behavior of specific types of quantum channels under repeated applications, including how quickly initial information is forgotten by a channel \cite{terhal_problem_2000,burgarth_ergodic_2013,albert_asymptotics_2019}. Indeed the spectral properties of quantum channels are important for characterizing e.g. subspaces in error correction \cite{ippoliti_perturbative_2015}, mixing times, relaxation times and fixed points of mixing channels \cite{temme_2-divergence_2010,terhal_problem_2000,burgarth_ergodic_2013,albert_asymptotics_2019}; spectral gaps of random channels \cite{bruzda_random_2009,kukulski_generating_2021}, random Lindblad dynamics \cite{Can2019Nov}, and random isometries \cite{fischmann_induced_2012,gonzalez-guillen_spectral_2018,Karuvade2018Mar}. Despite the importance of eigen- or singular values to establishing both channel norms and limiting behavior of channels, they have not been studied to understand when noisy quantum protocols fail. 

In this work, we motivate the study of eigen- and singular values of noisy quantum computation subject to realistic, imperfect operating conditions. We consider an ensemble of random superoperators where each sample represents an experimental run probabilistically subject to noise that may break a protocol's underlying assumptions. For example, for error correction, the number of single qubit errors may exceed code distance, while for error mitigation, a learned model may drift during mitigation. We call such random processes failure modes. As a first step towards operational usefulness, we propose a general theoretical framework to provide an upper bound to any ordered singular value of ensembles of random channel superoperators. Our analysis holds even if random superoperators only satisfy Kraus conditions in an idealized limit where the probability of incurring a failure mode goes to zero. To illustrate potential applications of our framework, we empirically find that both the singular spectral radius of imperfectly error-mitigated quantum gates as well as the average singular multiplicities of imperfectly error-corrected quantum codes are correlated with different failure modes for these protocols. We discuss how these insights could be useful for comparing operational robustness of different quantum protocols and/or establish their limiting behavior. 

The structure of this document is as follows. \cref{sec:background} provides the groundwork for how classical eigenspectral concentration results for random Hermitian matrices are applied to random complex matrices. In \cref{sec:applications}, we present our lemma to upper bind the maximal singular value of general random matrices with concentration. We apply this lemma to superoperators associated with quantum channels for error correction and error mitigation. We characterize the affect of failure modes empirically and compare with theoretical results. In \cref{sec:fullversion} we extend this lemma to a theorem that provides an upper bound for any singular value in an ordered set with concentration. We discuss potential for future operational applications before concluding in \cref{sec:conclusion}.

\section{\label{sec:background} Background}

To understand the spectrum of general quantum channels, we will need to elevate results applicable to Hermitian matrices to the general class of complex matrices using a Hermitian dilation \cite{tropp_introduction_2015}. In terms of notation, we denote the eigenvalues of a general complex square matrix $\hat{\Lambda} \in \mathbb{C}^{d' \times d'}$ as the set of all complex numbers that satisfy $\mathrm{det}(\lambda\mathbb{I}_{d'} - \hat{\Lambda})=0$. We use $\lambda(\hat{\Lambda} )$ to denote the full set of eigenvalues of the operator $\hat{\Lambda} $, as well as to denote the $i$-th ordered eigenvalue $\lambda_i(\hat{\Lambda} )$. The set of Hermitian operators are denoted by $\chi \in \mathbb{C}^{d' \times d'}$ and the set of non-negative Hermitian operations are denoted by $X \in \mathbb{C}^{d' \times d'}$. 

\begin{definition} A Hermitian dilation $\mathcal{M}_{d'} \to \mathcal{M}_{2d'}$ of a complex $d'\times d'$ matrix $\hat{\Lambda}$ is
\begin{align}
   \chi := \begin{bmatrix}
        0 & \hat{\Lambda} \\
        \hat{\Lambda}^\dagger & 0
    \end{bmatrix} \label{th:dilation},
\end{align} where $0$ represents sub-blocks, $\hat{\Lambda}^\dagger$ is a conjugate transpose of $\hat{\Lambda}$, and $\chi$ is Hermitian.
\end{definition}

Note that the action of the dilation is to relate the results for eigenvalues of Hermitian matrices to singular values of complex matrices. Singular values $\sigma_i$ of a complex matrix, $\hat{\Lambda}$ are the square-root of eigenvalues of the products,
\begin{align}
    \sigma_i(\hat{\Lambda}) := \sqrt{\lambda_i(\hat{\Lambda}^\dagger \hat{\Lambda})} = \sqrt{\lambda_i(\hat{\Lambda} \hat{\Lambda}^\dagger)}, \quad \forall i.
\end{align} The right (left) singular vectors are the eigenvectors of $\hat{\Lambda}^\dagger \hat{\Lambda}$ ($\hat{\Lambda} \hat{\Lambda}^\dagger$). We use $\sigma(\hat{\Lambda})$ to denote the full set of singular values of an operator $\hat{\Lambda}$, and the $i$-th ordered singular in this set as $\sigma_i(\hat{\Lambda})$. When $\hat{\Lambda}$ is normal ($[\hat{\Lambda}, \hat{\Lambda}^\dagger]=0$), then singular values reduce to the absolute eigenvalues of $\hat{\Lambda}$. Normal operators include Hermitian, unitary, and projection operators.

We will now make straightforward use of Chernoff matrix concentration results \cite{tropp_introduction_2015}. Of these, the main result we leverage in this work is the concentration of extremal spectral values of a sequence of random matrices, restated from existing literature below. 

\begin{theorem} [Matrix Chernoff \cite{tropp_introduction_2015}]\label{th:matrixchernoff}
  Consider a finite sequence $X_k$ of independent, random Hermitian matrices with a common dimension $d'$. Assume that eigenvalues $0 \leq \lambda_{\mathrm{min}}(X_k)$ and $ \lambda_{\mathrm{max}}(X_k) \leq L$ for each $k$. Introduce a matrix $Y = \Sigma_k X_k$ and define the maximum eigenvalue of the expectation $\mathbb{E}Y$ as,
    \begin{align}
        \mu_{\mathrm{max}} &= \lambda_{\mathrm{max}}(\mathbb{E}Y).
    \end{align} Then for $\theta > 0$, and $\epsilon \geq 0$
    \begin{align}
        &\mathbb{E} \lambda_{\mathrm{max}}(Y) \leq \frac{e^\theta -1}{\theta} \mu_{\mathrm{max}}+ \frac{L}{\theta}\log(d'), \label{eqn:mc:maxeigenvalue} \\ 
& \mathbb{P} \left[ \lambda_{\mathrm{max}}(Y)) \geq (1+\epsilon)\mu_{\mathrm{max}}\right] \leq d' \left( \frac{e^{\epsilon}}{(1 + \epsilon)^{1+ \epsilon}}\right)^{\mu_{\mathrm{max}}/L}, \label{eqn:mc:maxeigenvalueprob}
    \end{align}
\end{theorem}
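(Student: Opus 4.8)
The plan is to run the standard matrix Laplace-transform argument, specialized to positive-semidefinite summands carrying a uniform spectral cap. First I would reduce both claims to a single estimate on the matrix moment generating function $\mathbb{E}\operatorname{tr}e^{\theta Y}$ with $Y=\sum_k X_k$. For the tail, matrix Markov gives $\mathbb{P}[\lambda_{\mathrm{max}}(Y)\ge t]\le e^{-\theta t}\,\mathbb{E}\operatorname{tr}e^{\theta Y}$ for every $\theta>0$. For the mean, since $e^{\theta Y}\succeq 0$ one has $e^{\theta\lambda_{\mathrm{max}}(Y)}=\lambda_{\mathrm{max}}(e^{\theta Y})\le\operatorname{tr}e^{\theta Y}$, hence $\lambda_{\mathrm{max}}(Y)\le\tfrac1\theta\log\operatorname{tr}e^{\theta Y}$, and taking expectations with Jensen applied to the concave logarithm yields $\mathbb{E}\lambda_{\mathrm{max}}(Y)\le\tfrac1\theta\log\mathbb{E}\operatorname{tr}e^{\theta Y}$.

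The second step is to decouple the independent $X_k$. Lieb's concavity theorem --- equivalently the subadditivity of the matrix cumulant generating function --- yields $\mathbb{E}\operatorname{tr}\exp\!\big(\theta\sum_k X_k\big)\le\operatorname{tr}\exp\!\big(\sum_k\log\mathbb{E}e^{\theta X_k}\big)$. This is the only genuinely deep input, and it is the step I expect to be the main obstacle to a self-contained treatment; I would invoke it from the matrix-concentration literature rather than reprove it.

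The third step uses the hypothesis $0\preceq X_k\preceq L\,\mathbb{I}_{d'}$ to bound each cumulant. The scalar inequality $e^{\theta x}\le 1+\tfrac{e^{\theta L}-1}{L}x$ for $x\in[0,L]$ (the chord of the convex function $x\mapsto e^{\theta x}$) transfers to Hermitian operators whose spectrum lies in $[0,L]$, so $\mathbb{E}e^{\theta X_k}\preceq\mathbb{I}_{d'}+\tfrac{e^{\theta L}-1}{L}\mathbb{E}X_k$; operator monotonicity of the logarithm together with $\log(\mathbb{I}_{d'}+A)\preceq A$ then gives $\log\mathbb{E}e^{\theta X_k}\preceq\tfrac{e^{\theta L}-1}{L}\mathbb{E}X_k$. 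Summing, $\sum_k\log\mathbb{E}e^{\theta X_k}\preceq\tfrac{e^{\theta L}-1}{L}\mathbb{E}Y$, and monotonicity of the trace exponential with $\operatorname{tr}e^{A}\le d'\,e^{\lambda_{\mathrm{max}}(A)}$ delivers $\mathbb{E}\operatorname{tr}e^{\theta Y}\le d'\exp\!\big(\tfrac{e^{\theta L}-1}{L}\mu_{\mathrm{max}}\big)$.

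Finally I would substitute this into the two bounds of step one. For the mean, $\mathbb{E}\lambda_{\mathrm{max}}(Y)\le\tfrac{e^{\theta L}-1}{\theta L}\mu_{\mathrm{max}}+\tfrac1\theta\log d'$, and the reparametrization $\theta\mapsto\theta/L$ produces exactly \eqref{eqn:mc:maxeigenvalue}. For the tail, $\mathbb{P}[\lambda_{\mathrm{max}}(Y)\ge(1+\epsilon)\mu_{\mathrm{max}}]\le d'\inf_{\theta>0}\exp\!\big(-\theta(1+\epsilon)\mu_{\mathrm{max}}+\tfrac{e^{\theta L}-1}{L}\mu_{\mathrm{max}}\big)$; the exponent is minimized at $e^{\theta L}=1+\epsilon$, and substituting gives $d'\big(e^{\epsilon}/(1+\epsilon)^{1+\epsilon}\big)^{\mu_{\mathrm{max}}/L}$, i.e.\ \eqref{eqn:mc:maxeigenvalueprob}. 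The only calculus involved is this one-variable minimization of the exponent, which is routine.
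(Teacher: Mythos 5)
Your proposal is correct: the paper does not prove this theorem itself but restates it from Tropp's matrix-concentration monograph, and your argument (matrix Markov plus Jensen for the master bounds, Lieb's theorem for subadditivity of the matrix cumulant generating function, the chord bound $e^{\theta x}\le 1+\tfrac{e^{\theta L}-1}{L}x$ on $[0,L]$ with operator monotonicity of $\log$, then the reparametrization $\theta\mapsto\theta/L$ for the mean and the optimization $e^{\theta L}=1+\epsilon$ for the tail) is exactly the standard proof in that cited source. Both closed-form bounds you derive match \cref{eqn:mc:maxeigenvalue} and \cref{eqn:mc:maxeigenvalueprob}, so there is nothing to add beyond noting that Lieb's concavity theorem is, as you say, the one ingredient reasonably imported rather than reproved.
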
 The above theorem, the matrix Chernoff inequality, gives concentration bounds for the largest eigenvalue of a sum of independent, random, positive semi-definite matrices. (It is a matrix analogue of the classical Chernoff bound, which controls the tail probabilities for sums of bounded independent scalar random variables). The Chernoff fluctuation term $\mu_{\mathrm{max}}$ represents the fluctuations in the spectrum of the average random matrix $\lambda_{\mathrm{max}}(\mathbb{E}Y)$, to be contrasted with the expected value of the extremal values of the spectrum $\mathbb{E} \lambda_{\mathrm{max}}(Y)$ in \cref{eqn:mc:maxeigenvalue}. Here $L$ is a uniform upper bound, while $d'$ is related to matrix dimension under consideration. The choice of $\theta > 0$ provides some control over the sharpness of the upper bound in \cref{eqn:mc:maxeigenvalue} for specific applications. Meanwhile, concentration of the maximal eigenvalue is given by \cref{eqn:mc:maxeigenvalueprob}, where the upper tail decays faster than an exponential random variable with a mean $L/\mu_{\mathrm{max}}$. Corresponding results for the minimal eigenvalue can be found in Ref.~\cite{tropp_introduction_2015} but are not used in this work.

Additionally, it will be helpful to partition spectral values, e.g. to isolate the second-largest absolute spectral value. To do this, we use the general form of the Cauchy interlace theorem.
\begin{theorem}[Generalization of Cauchy Interlace Theorem\cite{hwang2004cauchy,higham2021}] \label{th:cauchyinterlace}
    For Hermitian matrix $\chi \in \mathbb{C}^{d' \times d'}$, let $\chi_S$ be a principal submatrix of order $d'-l$. Then $\lambda_{i+l}(\chi) \leq \lambda_i(\chi_S) \leq \lambda_i(\chi)$ for $i = 1, \hdots, d'-l$, counting with multiplicity.
 \end{theorem}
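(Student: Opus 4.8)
\emph{Proof proposal.} The plan is to derive the statement from the Courant--Fischer min--max characterization of Hermitian eigenvalues, with eigenvalues ordered non-increasingly ($\lambda_1=\lambda_{\max}$), which is the convention under which the claimed inequalities are meaningful. First I would observe that it suffices to treat the case in which $\chi_S$ is the \emph{leading} principal submatrix, namely the top-left $(d'-l)\times(d'-l)$ block of $\chi$: an arbitrary principal submatrix arises by a permutation similarity $P^\dagger\chi P$, which alters neither the spectrum of $\chi$ nor that of the submatrix. With this reduction in hand, I would embed $\mathbb{C}^{d'-l}$ into $\mathbb{C}^{d'}$ as the coordinate subspace $\mathcal{E}$ spanned by the first $d'-l$ standard basis vectors, writing $\tilde{x}\in\mathcal{E}$ for the image of $x$; then $\tilde{x}^\dagger\chi\tilde{x}=x^\dagger\chi_S x$ and $\tilde{x}^\dagger\tilde{x}=x^\dagger x$, so the Rayleigh quotient of $\chi_S$ on $\mathbb{C}^{d'-l}$ coincides with that of $\chi$ restricted to vectors lying in $\mathcal{E}$.

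For the upper bound $\lambda_i(\chi_S)\le\lambda_i(\chi)$ I would use the max--min form: $\lambda_i(\chi_S)$ equals the supremum over $i$-dimensional subspaces $V\subseteq\mathcal{E}$ of $\min_{x\in V,\,x\neq 0} x^\dagger\chi x/(x^\dagger x)$. Each such $V$ is also an $i$-dimensional subspace of $\mathbb{C}^{d'}$, so this supremum runs over a sub-collection of the subspaces appearing in the max--min formula for $\lambda_i(\chi)$ and hence cannot exceed it. For the lower bound $\lambda_{i+l}(\chi)\le\lambda_i(\chi_S)$ I would use the dual min--max form: $\lambda_i(\chi_S)$ equals the infimum over $\big((d'-l)-i+1\big)$-dimensional subspaces of $\mathcal{E}$ of $\max_{x\neq 0} x^\dagger\chi x/(x^\dagger x)$, whereas $\lambda_{i+l}(\chi)$ is the infimum of the same functional over \emph{all} subspaces of $\mathbb{C}^{d'}$ of dimension $d'-(i+l)+1$. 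Since $(d'-l)-i+1=d'-(i+l)+1$, these are subspaces of the same dimension, and enlarging the family over which the infimum is taken can only decrease it, giving $\lambda_{i+l}(\chi)\le\lambda_i(\chi_S)$. Multiplicities require no separate argument, since Courant--Fischer returns eigenvalues counted with multiplicity.

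An equivalent route, if one wishes to invoke only the single-deletion ($l=1$) Cauchy interlacing theorem of Ref.~\cite{hwang2004cauchy}, is induction on $l$: build a chain $\chi=\chi^{(0)}\supset\chi^{(1)}\supset\cdots\supset\chi^{(l)}=\chi_S$ in which each $\chi^{(j+1)}$ deletes one further row and its matching column, apply $\lambda_{k+1}(\chi^{(j)})\le\lambda_k(\chi^{(j+1)})\le\lambda_k(\chi^{(j)})$ at each step, and telescope.

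The only genuine source of difficulty is the index bookkeeping: one must verify that the subspace dimension $d'-(i+l)+1$ entering the min--max formula for $\lambda_{i+l}(\chi)$ matches the dimension $(d'-l)-i+1$ entering the formula for $\lambda_i(\chi_S)$, and keep the non-increasing ordering convention consistent throughout. Once this counting is pinned down, both inequalities follow at once from the monotonicity of $\sup$ and $\inf$ under enlarging the set of competing subspaces, so no substantive analytical obstacle remains.
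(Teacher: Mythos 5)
Your proof is correct: the Courant--Fischer max--min and min--max characterizations, together with the observation that $(d'-l)-i+1=d'-(i+l)+1$, give both inequalities exactly as you describe, and the telescoped single-deletion alternative works as well. Note that the paper offers no proof of this theorem at all --- it is quoted as a known result from \cite{hwang2004cauchy,higham2021} --- and your argument is essentially the standard one given in those references, so the approaches coincide rather than differ.
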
 This theorem states that the eigenvalues of the principal submatrix $\chi_S$, formed by deleting $d'-l$ rows (and the same $d'-l$ columns) of a Hermitian matrix $\chi$, will interlace with the eigenvalues of $\chi$. 

In the next section, we present a tool to study spectral fluctuations of general complex matrices, $\hat{\Lambda}$, which will later be associated with superoperators of quantum channels. A quantum channel is a completely positive map represented by the following operator sum, 
    \begin{align}
    \Lambda (\rho) := \Sigma_{i=1}^\kappa A_i \rho A_i^\dagger, \quad \Sigma_i A_i^\dagger A_i \leq \mathbb{I}, \label{eqn:krausrep}
\end{align} where $\rho$ is an $d\times d$ dimensional input quantum state, and $A_i \in \mathcal{M}_d $ are Kraus operators for a total number of $\kappa$ terms in the sum. For $n$ qubits, $d=2^n$ is considered in this work. The channel is trace preserving $ \iff \Sigma_i A_i^\dagger A_i = \mathbb{I}$ or unital $ \iff \Sigma_i A_i A_i^\dagger = \mathbb{I}$, where $\mathbb{I}$ is the identity. The channel $\Lambda$ has a matrix representation $\hat{\Lambda} \in \mathbb{C}^{d^2 \times d^2}$, with matrix elements given by $\hat{\Lambda}_{ij}:= \langle B_i |\Lambda |B_j\rangle$ for an orthonormal basis $\{B_i\}$ of $\mathcal{M}_{d^2}$, i.e. the Hilbert-Schmidt norm in regular operator space. This basis will refer to matrix or natural units in this work \cite{wolf_tour_2012,watrous_theory_2018}. One obtains a superoperator representation of the quantum channel from \cref{eqn:krausrep} as the complex matrix,
    \begin{align}
        \hat{\Lambda} := \Sigma_{i=1}^\kappa  A_i^* \otimes A_i,  \label{eqn:superoprep}
    \end{align} where ${}^*$ denotes entry-wise conjugation. In particular, superoperators of quantum channels need not be normal \cite{wolf_tour_2012}. For completely-positive trace-preserving maps, the maximum singular value can be between unity and the dimension of the Hilbert space, and is exactly unity for unital channels \cite{watrous_theory_2018}. Finally, the composition of two channels $\Lambda_1 \circ \Lambda_2 (\rho)$ can be expressed as matrix multiplication of superoperators, $\hat{\Lambda}_1 \hat{\Lambda}_2$. Further details on channel formalism are provided in \cref{app:channels}. We return to superoperators for quantum channels after first considering singular values of general complex matrices. 

\section{Characterizing imperfect protocols for combating noise \label{sec:applications}}
 
We now present a tool to study spectral fluctuations of general complex random matrices, $\hat{\Lambda}$. Presented as \cref{th:lemma}, this result is a straightforward combination of matrix Chernoff results applied to a Hermitian dilation of a complex matrix. Our result will be subsequently used to study quantum channels subject to noise including applied failure modes.

\begin{lemma}[Spectral concentration of maximal singular values]\label{th:lemma}
    Let $\hat{\Lambda} \in \mathcal{M}_{d'}$ be a random complex valued matrix with ordered singular values $ \sigma_1(\hat{\Lambda}) \geq \hdots \sigma_i(\hat{\Lambda}) \hdots  \geq \sigma_{d'}(\hat{\Lambda})$, counted with multiplicity. Introduce constants $\theta, L \geq 0$, as in \cref{th:matrixchernoff}. Then the maximal singular value, $\sigma_1(\hat{\Lambda})$ is upper bounded,
    \begin{align}
    \mathbb{E}[\sigma_{1}(\hat{\Lambda})] &\leq \mu \left( \frac{ e^\theta -1 }{\theta}\right) + \frac{2L}{\theta}\log (2d'),
    \end{align} 
    where $\mu$ is defined by \cref{eqn:lemma:chernoffmu}, and the expectation value is computed over an ensemble of random channels.
\proof One considers the Hermitian dilation \cref{th:dilation} with $\chi : = \begin{bmatrix}
    0 & \hat{\Lambda} \\
    \hat{\Lambda}^\dagger & 0 \\
 \end{bmatrix}$, where we are concerned with establishing an upper bound for the maximal value $ \sigma_{1}(\hat{\Lambda})=\lambda_1(\chi)$. Since \cref{th:matrixchernoff} requires non-negative Hermitian operators, we rewrite $\chi$ as sum of non-negative operations. Let $P^{(+)}$ ($P^{(-)}$) denote the projectors in the eigenspaces of $\chi$ that are non-negative (negative). Then $\chi^{(\pm)} := \pm P^{(\pm)} \chi$  are positive semidefinite matrices such that $\chi = \chi^{(+)} - \chi^{(-)}$. Using Weyl's inequality for sums of Hermitian matrices \cite{replacethiswithatextbook}, one can substitute this Hermitian sum into the right hand side of $\sigma_{1}(\hat{\Lambda}) = \lambda_1(\chi) \leq \lambda_1(\chi^{(+)}) + \lambda_1(\chi^{(-)})$ and
 \begin{align}
     \mathbb{E} [\sigma_{1}(\hat{\Lambda})] & \leq \mathbb{E} [\lambda_1(\chi^{(+)})] + \mathbb{E} [\lambda_1(\chi^{(-)})],
 \end{align} where the last step follows from taking the expectation value of both sides. We can now call  \cref{th:matrixchernoff} twice for each term by setting $Y \equiv \chi^{(\pm)}$, and
 \begin{align}
     \mu := \lambda_{\mathrm{max}}(\mathbb{E}\chi^{(+)}) + \lambda_{\mathrm{max}}(\mathbb{E}\chi^{(-)}) \label{eqn:lemma:chernoffmu},
\end{align} for a common choice of $\theta$.\qed   
\end{lemma}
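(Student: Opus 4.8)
The plan is to reduce this singular-value statement about the general complex matrix $\hat{\Lambda}$ to an eigenvalue statement about a Hermitian matrix, so that \cref{th:matrixchernoff} becomes directly applicable. First I would pass to the Hermitian dilation $\chi$ of \cref{th:dilation}, which lives in $\mathcal{M}_{2d'}$ and whose spectrum is exactly $\{\pm\sigma_i(\hat{\Lambda})\}$; in particular $\lambda_1(\chi)=\sigma_1(\hat{\Lambda})$, so it suffices to control $\mathbb{E}\,\lambda_1(\chi)$.

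The obstruction to invoking the matrix Chernoff bound straight away is that $\chi$ is not positive semidefinite, whereas \cref{th:matrixchernoff} requires $0\le\lambda_{\min}$. I would resolve this by splitting $\chi$ along its spectral projectors: let $P^{(\pm)}$ project onto the non-negative and the negative eigenspaces of $\chi$, and set $\chi^{(+)}:=P^{(+)}\chi$ and $\chi^{(-)}:=-P^{(-)}\chi$, both positive semidefinite, so that $\chi=\chi^{(+)}-\chi^{(-)}$. Applying Weyl's inequality for sums of Hermitian matrices, together with positivity of $\chi^{(-)}$, gives $\sigma_1(\hat{\Lambda})=\lambda_1(\chi)\le\lambda_1(\chi^{(+)})+\lambda_1(\chi^{(-)})$; taking expectations on both sides then splits the problem into two separate invocations of \cref{th:matrixchernoff}.

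Next I would apply \cref{th:matrixchernoff} to each summand with $Y\equiv\chi^{(\pm)}$ (a sequence of length one), common dimension $2d'$, the same parameter $\theta>0$, and a uniform bound $L$ on the eigenvalues; by \cref{eqn:mc:maxeigenvalue} this yields $\mathbb{E}\,\lambda_1(\chi^{(\pm)})\le\tfrac{e^\theta-1}{\theta}\lambda_{\max}(\mathbb{E}\chi^{(\pm)})+\tfrac{L}{\theta}\log(2d')$. Summing the two bounds and identifying $\mu$ with $\lambda_{\max}(\mathbb{E}\chi^{(+)})+\lambda_{\max}(\mathbb{E}\chi^{(-)})$, as in \cref{eqn:lemma:chernoffmu}, produces the claimed inequality; the factor $2$ in $\tfrac{2L}{\theta}\log(2d')$ comes from the two applications and the $2d'$ inside the logarithm from the doubled dilation dimension.

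The step that requires the most care is verifying the hypotheses of \cref{th:matrixchernoff} for $\chi^{(\pm)}$: non-negativity holds by construction, but the uniform upper bound demands that $\sigma_1(\hat{\Lambda})\le L$ almost surely over the ensemble, and one must check that $\lambda_{\max}(\chi^{(\pm)})=\sigma_1(\hat{\Lambda})$ so that a single constant $L$ controls both terms. For superoperators associated with quantum channels via \cref{eqn:krausrep} such an $L$ is guaranteed, since the maximal singular value is bounded by the Hilbert-space dimension even in the idealized limit where the Kraus conditions are attained only asymptotically. I would close by remarking that $\theta>0$ remains free and can be optimized per application to tighten the bound.
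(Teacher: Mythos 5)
Your proposal is correct and follows essentially the same route as the paper's proof: Hermitian dilation, splitting $\chi$ into positive-semidefinite parts $\chi^{(\pm)}$ via its spectral projectors, Weyl's inequality, taking expectations, and two invocations of \cref{th:matrixchernoff} with a common $\theta$ to assemble $\mu$ as in \cref{eqn:lemma:chernoffmu}. Your additional check that a single uniform bound $L$ must control $\lambda_{\max}(\chi^{(\pm)})$ over the ensemble is a sensible clarification that the paper leaves implicit.
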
 

Using the \cref{th:lemma}, we now focus on characterising two predominant approaches to combating noise: error mitigation and error correction. Both strategies assume intrinsic noise has certain properties in order to be effective which may not be satisfied in practice. We call these fundamental violations failure modes. We empirically characterize the effect of these failure modes on eigenspectra and singular spectra, and find that singular values shows sensitivity to the type of failure in a manner that eigenvalues do not. We find that quantum channels subject to failure modes may yield invalid quantum channels, but can still be studied using our techniques using only the properties of complex matrices. For superoperators of $n$ qubits, we take $d'=d^2= 4^n$.

For all empirical simulations below, we generate an ensemble of random channels with simulated noise and/or failure modes. Every instance of a random channel gives rise to both singular and eigenspectra of the channel superoperator, as well as a corresponding dilation. The true eigenvalues and the true singular values are computed for each channel superoperator within the ensemble and ordered eigen- and singular values are then averaged over the ensemble. We then compare these true values with the Chernoff fluctuation term, $\mu$, in \cref{th:lemma} by computing the expected value of the dilation $\chi$ over an ensemble of random superoperators and then computing the maximal eigenvalue of this averaged dilation. Noting that the $L\log(2d^2)$ term is constant for each $d^2 \times d^2$ channel superoperator, where $d=2^n$ for $n$ qubits, the Chernoff fluctuation $\mu$ is the only quantity sensitive to the change in channel properties. Application specific details for how an ensemble of random channels is generated are discussed below. 

\subsection{Imperfect error mitigation}

In understanding imperfect error mitigation, we focus on probabilistic error cancellation \cite{Temme2017Nov,vandenBerg2023Aug,Govia2025Mar,Gupta2024Jun} to obtain estimates of noise-free expectation values from an ensemble of noisy quantum circuits. First, we analyze spectral distributions of a perfectly error-mitigated, noisy `identity' gate where mitigation proceeds using exact information about the intrinsic noise. We repeat this analysis for a ladder of CX gates. In both cases, the noise is learned and mitigated over the full Pauli basis for $n$-qubits. We then compare this ideal scenario with ineffective mitigation in two ways: when learned information about the intrinsic noise is increasingly imperfect, and when state-decay events are added between repetitions of mitigated channels. For simplicity, we assume that intrinsic noise for an ideal Clifford gate, $U$, on $n$ qubits can be reshaped to be a Pauli channel via twirling $\mathcal{P}$, as illustratively shown in the inset of \cref{fig:mitigation}(a), which enables a sparse description of the noise amenable to learning and mitigation. 

We are interested in characterizing the superoperator associated with the mitigation of a twirled noisy gate. Information about the twirled noisy gate is learned via benchmarking experiments (coefficients $c_j$) and used to construct an appropriate distribution ($\{|c_i^{inv}|/\gamma\}_i$) over the Pauli group. Inserting a sampled Pauli from this distribution before the twirled, noisy gate constitutes as performing the mitigation step, denoted $\hat{\Lambda}^{-1}$. In practice, one only has access to an empirical estimator $\hat{\zeta}$ of the true channel $\hat{\Lambda} \hat{\Lambda}^{-1}$, 
\begin{align}
    \hat{\zeta}_\kappa &:= \mathbb{E}_\kappa[\hat{\Lambda} \hat{\Lambda}^{-1}] \\
    &= \gamma \left[ 
    \sum_{i,j=1}^{\kappa}  c_j \mathrm{sgn}(c_i^{inv}) \frac{|c_i^{inv}|}{\gamma} (P_j^* U^* P_i^* \otimes  P_j U P_i) \label{eqn:pec:zeta}
    \right].
\end{align} In the above, twirling represents insertions of random Pauli operators $P_j$ sampled uniformly over the elements of the Pauli group (index $j$) which reshapes intrinsic noise to a Pauli channel. Noise information is learned using a Pauli basis, where $i$ runs over the desired basis elements, and $c_i^{inv}$ are computed using learning experiments as detailed in \cite{vandenBerg2023Aug}. During mitigation, a random Pauli operator $P_i$ is sampled with a probability $\frac{|c_i^{inv}|}{\gamma}$ and inserted before the unitary gate $U$ in the circuit. The overall scale $\gamma$ and the $\pm$ signs denoted by $\mathrm{sgn}(c_i^{inv})$ are used to re-scale computed quantities in post-processing. For simplicity, the number of twirled samples (index $j$), and the number of samples used for mitigation (index $i$) are chosen to be equal ($\kappa$) in the equation above. The resulting superoperator representing this sampling procedure converges to the ideal unitary, $\hat{\zeta} \to \hat{U}$ in the high data limit ($\kappa \to \infty$).Further details can be found in \cref{app:mitigation:extra} or Ref.~\cite{vandenBerg2023Aug}. For our work in terms of singular values, one anticipates that perfect mitigation implements the ideal unitary, where the singular spectrum is $\sigma(\hat{U}\hat{U}^\dagger) = \{1\}$ with multiplicity of $d^2$.

For empirical calculations, we generate an ensemble of 100 error-mitigated superoperators $\hat{\zeta}_\kappa$. Each error mitigated superoperator is constructed by multiplying an ideal gate with Pauli operators sampled from a true Pauli noise model and then a Pauli sampled from the mitigation probability $\frac{|c_i^{inv}|}{\gamma}$. One may perturb the noise model prior to mitigation with strength $\epsilon$. For many layers of noisy gates, a product of independently generated mitigated superoperators is taken. One may probabilistically add state decay events between mitigated layers with strength $\alpha$. The resulting superoperator represents one instance of a random circuit and $\kappa$ random instances are averaged to yield the final error-mitigated superoperator. Under ideal conditions ($\epsilon=\alpha=0$), this final error-mitigated superoperator $\hat{\zeta}_\kappa$ converges to a valid quantum channel when $\kappa\to \infty$.

\begin{figure}
    \centering
\includegraphics[width=0.99\linewidth]{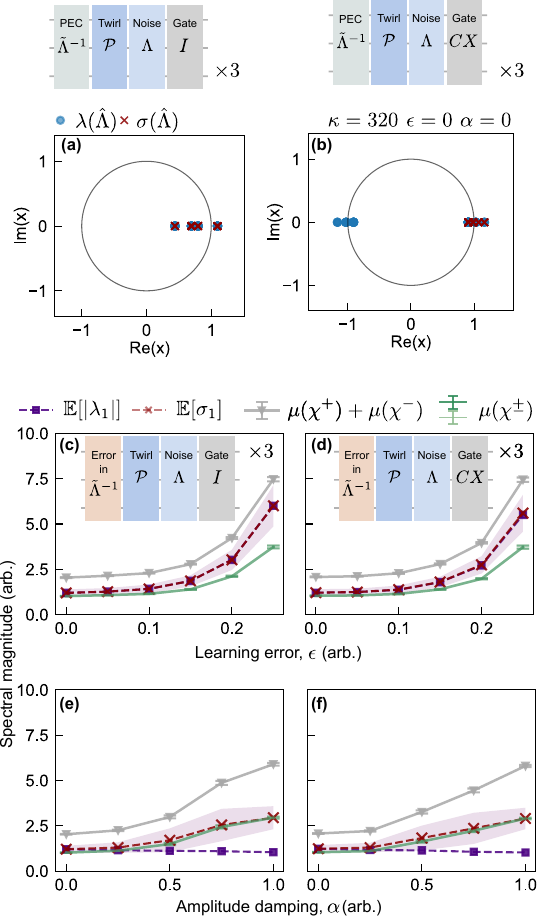}
    \caption{Spectral distributions for three layers of an error mitigated identity ($I_1I_2I_3$, left) or CX ladder ($CX_{1,2}CX_{2,3}$, right) using PEC \cite{vandenBerg2023Aug}, where the intrinsic noise model is simply set to $X,Z$ errors on the first qubit. The true eigenspectral radius $\mathbb{E}[\lambda_1]$ (indigo) and singular spectral radius $\mathbb{E}[\sigma_1]$ (red) is compared to $\mu$ (\cref{eqn:lemma:chernoffmu}).(a),(b) Instance of a perfectly mitigated random superoperator for fixed $\kappa=320$. For normal identity gate in (a), absolute eigenvalues and singular values smoothly approach unity from above and below with increasing $\kappa$. (c),(d) Under imperfect learning, the intrinsic noise coefficients are perturbed by uniformly random variates with strength proportional to $\epsilon$ before being used for mitigation. Spectral quantities diverge from unity with increasing learning error $\epsilon$ in the high data limit $\kappa=320$. (e),(f) Under amplitude damping noise with strength $\alpha$, superoperators are no longer normal and absolute eigenvalues do not coincide with singular values, even for identity gate. Chernoff spectral terms and true singular values diverge with increasing amplitude damping noise strength, $\alpha$, for fixed learning error $\epsilon=0.2$ and data samples $\kappa=320$. In all cases, Chernoff fluctuation (grey) appears to be an upper bound to a diverging spectral radius.}
    \label{fig:mitigation}
\end{figure}

Using \cref{th:lemma}, we now test the extent to which singular values of error-mitigated channel approach unity. In \cref{fig:mitigation}, we consider an error-mitigated channel on three qubits, where the ideal gate is set to be the identity ($I_1I_2I_3$, left) or a CX ladder ($CX_{1,2}CX_{2,3}$, right) in each layer, and each layer is repeated 3 times. Simulated intrinsic noise is chosen to be single-qubit $X,Z$ errors on the first qubit. The consequence of the identity gate is that the resulting mitigated superoperator is normal, so eigen- (indigo) and singular (red) values coincide in \cref{fig:mitigation}(a) (c.f. CX ladder in \cref{fig:mitigation}(b)). Here, a random instance of a mitigated map with $\kappa=320$ shows that eigenvalues are not limited to the unit circle and so the mitigated map approaches a valid channel only for infinite $\kappa$ under idealized conditions. 

We move away from perfect mitigation by considering imperfect noise information in \cref{fig:mitigation}(c),(d). Here, the exact noise coefficients are perturbed by uniformly random variates with strength $\epsilon \in [0, 0.25]$, and then renormalised, before being used to calculate $c_i^{inv}$ during mitigation. The resulting superoperator remains normal, but the spectral radius rapidly rises away from unity for increasing $\epsilon$. The Chernoff fluctuation term $\mu(\chi^+) + \mu(\chi^-)$ (gray solid) continues to provide a valid upper bound to the spectral radius, even though the uniform bound $L$ cannot be safely set to unity (as for CPTP maps) in the $L\log(2d^2)$ term as a result of this divergent behavior. 

Finally, we consider energy relaxation events in \cref{fig:mitigation}(e),(f) that may affect an error-mitigated circuit. These error relaxation events are more likely to occur when total circuit depth has runtime comparable to the shortest $T_1$ time of the noisiest qubit. We model this scenario in \cref{fig:mitigation}(d), where non-unital amplitude damping noise with strength $\alpha$ is applied to all qubits between each mitigated layer. Here, the superoperator is no longer normal, so eigenspectral and singular spectral radii are no longer the same. The eigenspectral radius is unity with increasing noise strength $\alpha$, as the mitigation superoperator is dominated by features of the non-unital noise. Meanwhile, the singular spectrum diverges with increasing $\alpha$, suggesting increasing amounts of unaddressable errors that prohibit convergence to ideal unitary (where singular values would be unity with multiplicity of $d^2$). Again, we confirm the singular spectral radius (red dashed) is upper bounded by Chernoff fluctuation term (gray). Indeed, the divergence of the singular spectral radius with $\alpha$, can be contrasted with the convergence of the eigenspectral radius to unity with $\alpha$, where the latter provides no indication about the failure of the mitigation protocol as unaddressable errors increase.

Since \cref{th:lemma} applies to any complex random matrices, we are able to examine convergence of an imperfectly mitigated quantum channel (that is not CPTP) to an ideal unitary. The $x$-axis \cref{fig:mitigation}(c)-(f) represents noise strength instead of, say, time steps of a discrete Markov chain. However, we use `convergence' to mean that the noise level per step and the total number of steps can always be adjusted in simulations such that the overall noise strength equals the $x$-axis in (c)-(f). In this sense, the sample-average and the long-time average are equivalent for this toy problem, but the same analysis could be repeated in terms of number of steps when such ergodicity is not guaranteed. 

\subsection{Imperfect error correction}

\begin{figure}
    \centering
    \includegraphics[width=0.99\linewidth]{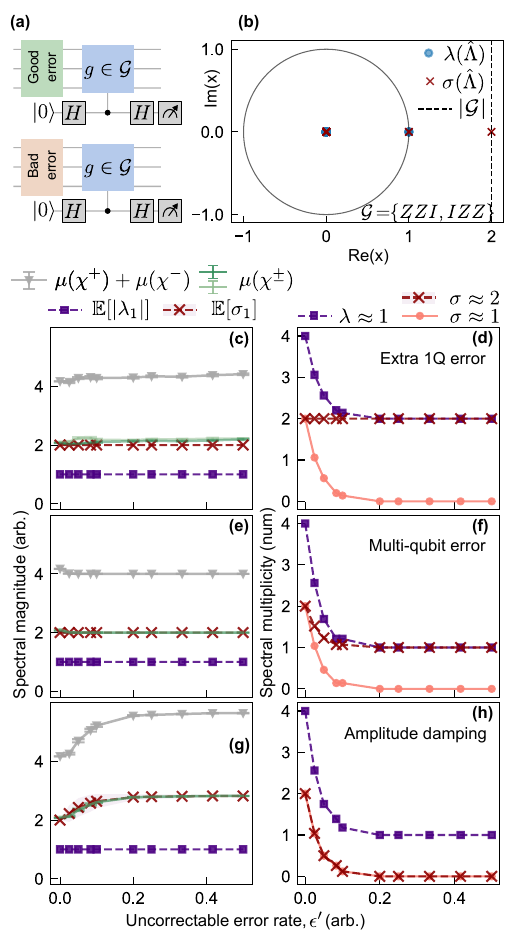}
    \caption{Spectral distributions for $d=3$ qubit bit-flip code under perfect error correction vs. imperfect error correction after $25$ rounds of $|\mathcal{G}|$ measurements. Rows correspond to 3 failure modes: extra bit-flip error, multi-qubit Pauli error, or amplitude damping error per round. (a) illustrates that errors are added before a full round of stabilizer measurements. Uncorrectable errors are added with probability $\epsilon' \in  [0, 1]$. (b) For reference, a perfect error correcting channel with correctable errors ($\epsilon'=0$) reveals eigenvalues $\in \{0,1\}$ and singular values  $\in \{0,1, |\mathcal{G}|\}$ are degenerate. (c),(e),(g) Largest singular value (red dashed) is constant for non-unital noise in (c),(e) and Chernoff fluctuation term (grey solid) appears to be a loose upper bound in all cases. (d),(f),(h) Multiplicity of singular values at $\sigma=\{1, |\mathcal{G}|\}$ decays from two to unity (zero) for Pauli (amplitude damping) noise as uncorrectable error probability $\epsilon'$ increases. Meanwhile, multiplicity for $|\lambda| \approx 1$ falls from $4^{k}$ to unity for all uncorrectable error models.  
    \label{fig:errorcorrection}}
\end{figure}

Finally, we analyze the singular spectrum for a quantum channel representing quantum error correction (QEC) code subject to Pauli noise. Perfect error correction constitutes as a code subject to a noise model for which errors are both detectable and correctable with respect to the code. In this section we characterize the effect on eigen- and singular values when noise purposefully violates assumptions of the protocol.

In quantum error correction, information is protected using redundancy by encoding quantum information into a larger set of qubits \cite{Gottesman1997May}. For so-called stabilizer codes, a logical qubit refers to degrees of freedom in subspace formed by the joint $+1$ eigenspace of a set of commuting Pauli operators, $\mathcal{G}$, called stabilizers. The specification of stabilizers essentially defines the code, and are chosen such that single-qubit Pauli errors anti commute with this set. Consequently, a logical qubit affected by errors can be detected in the $-1$ eigenspace of one or more stabilizers \cite{Gottesman1997May}. Assuming Pauli noise with sufficiently low error rates, it is sometimes possible to further apply a recovery operation to correct the effect of detected errors. In order to detect whether the logical state is impacted by errors (i.e. resides $-1$ eigenstate of one or more stabilizers), one appends an ancillary qubit to the desired system, and performs an ancilla-controlled Hadamard measurement of each stabilizer generator (see \cref{fig:errorcorrection}(a)). Denoting the stabilizers generators $S_i$ as the set $\mathcal{G}:= \{ S_i \}_{i=1}^{n-k}$, an $[n,k]$ error correcting code encoding $k$ logical qubits into $n$ physical qubits has a channel superoperator,
\begin{align}
    \hat{\Lambda} &:= \sum_{m=1}^{2^{|\mathcal{G}|}}  P_m^* \Pi_m^* \otimes P_m \Pi_m, \label{eq:recover}\\
    \Pi_m &: =  \prod_{i=1}^{|\mathcal{G}|} \frac{1}{2}\left(\mathbb{I} + (-1)^{\nu_m(i)} S_i\right), 
\end{align} where $\Pi_m$ is a projector, characterized by the single round in which all stabilizer generators are measured to yield a unique bitstring $\vec{\nu}_m := (\nu_m(1),\dots, \nu_{m}(n-k))$; $\nu_m(i)$ is a binary measurement outcome on the $i$-ancilla qubit for stabilizer $S_i$, and $P_m$ is resultant recovery operator for correctable errors \cite{Gottesman1997May, ippoliti_perturbative_2015}. As detailed in \cref{app:qec:extra} (\cref{app:qec:eigenvals}), the eigenvalues of this channel comprises of $0$ and $1$ with multiplicities $4^n-4^k$ and $4^k$ respectively. We find that singular values of the code are given by \cref{main:qec:singularvalues}.

\begin{theorem} \label{main:qec:singularvalues}
    Consider an $[n,k]$ stabilizer code with the set of stabilizer generators $\mathcal{G}=\{S_j\}_{j=1}^{n-k}$, the set of correctable Pauli errors $\{P_m\}_{l=1}^{2^{n-k}}$ with $P_1=\mathbb{I}$ and error-correcting channel given by
    \begin{equation}
        \Lambda(\rho) = \sum_{m=1}^{2^{|\mathcal{G}|}} P_m \Pi_m \rho \Pi_m P_m, 
    \end{equation} corresponding to the superoperator representation in \cref{eq:recover}. The singular-value spectrum of $\Lambda$ comprises $2^{\frac{n-k}{2}}$ and 0 with multiplicities $4^k$ and $4^n-4^k$ respectively.
   \proof See \cref{app:qec:extra}.
\end{theorem}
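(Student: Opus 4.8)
Proof proposal. The plan is to compute the Gram superoperator $\hat{\Lambda}\hat{\Lambda}^\dagger$ in closed form and show it is a scalar multiple of a Hermitian projector, after which the spectrum is immediate. Writing $A_m := P_m\Pi_m$, the superoperator representation in \cref{eq:recover} is $\hat{\Lambda}=\sum_m A_m^*\otimes A_m$, and since $\sigma_i(\hat{\Lambda})=\sqrt{\lambda_i(\hat{\Lambda}\hat{\Lambda}^\dagger)}$ it is enough to diagonalize $\hat{\Lambda}\hat{\Lambda}^\dagger = \sum_{m,m'}\bigl(A_m^*A_{m'}^{T}\bigr)\otimes\bigl(A_mA_{m'}^\dagger\bigr)$.

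First I would record two structural facts about the stabilizer construction. (i) The syndrome projectors form a complete orthogonal family: $\Pi_m\Pi_{m'}=\delta_{mm'}\Pi_m$ and $\sum_m\Pi_m=\mathbb{I}$, with each $\Pi_m$ of rank $2^{k}$ (the $2^{n-k}$ syndrome subspaces partition the $2^{n}$-dimensional Hilbert space into blocks of the code-space dimension). (ii) The recovery identity $\Pi_m = P_m\Pi_1 P_m$, equivalently $P_m\Pi_mP_m=\Pi_1$: the Hermitian Pauli $P_m$ (with $P_m^2=\mathbb{I}$, $P_1=\mathbb{I}$) is by construction a correctable error whose (anti)commutation pattern with the generators $\{S_j\}$ reproduces the syndrome $\vec\nu_m$, so conjugation by $P_m$ carries the simultaneous $+1$-eigenspace of all $S_j$ (the code space, projector $\Pi_1$) onto the syndrome-$\vec\nu_m$ subspace (projector $\Pi_m$); since both subspaces have dimension $2^{k}$ the inclusion is an equality. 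I would also note that the $\{P_m\}_{m=1}^{2^{n-k}}$ are one representative per error coset, so distinct $m$ give distinct syndromes.

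Given (i) and (ii) the computation is bookkeeping. The cross terms vanish because $A_mA_{m'}^\dagger = P_m\Pi_m\Pi_{m'}P_{m'}=\delta_{mm'}\,P_m\Pi_mP_m=\delta_{mm'}\,\Pi_1$, so only the $2^{n-k}$ diagonal terms remain, and in each one $A_m^*A_m^{T}=\overline{A_mA_m^\dagger}=\overline{P_m\Pi_mP_m}=\Pi_1^{*}$. Hence $\hat{\Lambda}\hat{\Lambda}^\dagger = 2^{n-k}\,\bigl(\Pi_1^{*}\otimes\Pi_1\bigr)$. Since $\Pi_1^{*}$ and $\Pi_1$ are rank-$2^{k}$ Hermitian projectors, $\Pi_1^{*}\otimes\Pi_1$ is a rank-$4^{k}$ Hermitian projector, so $\hat{\Lambda}\hat{\Lambda}^\dagger$ has eigenvalue $2^{n-k}$ with multiplicity $4^{k}$ and eigenvalue $0$ with multiplicity $4^{n}-4^{k}$. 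Taking square roots gives singular values $2^{(n-k)/2}$ with multiplicity $4^{k}$ and $0$ with multiplicity $4^{n}-4^{k}$, i.e.\ the claim with $|\mathcal{G}|=n-k$.

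The only place where genuine work is needed is the justification of fact (ii): that the index labelling makes $P_m$ an honest recovery for $\Pi_m$, that $\{P_m\}$ exhausts the $2^{n-k}$ cosets exactly once, and that $\operatorname{rank}\Pi_m=2^{k}$. Everything downstream is routine; in particular no nontrivial Pauli product $P_mP_{m'}$ ever appears, so no phase conventions enter, because projector orthogonality annihilates the cross terms before any such product can arise. An equivalent and perhaps more transparent route is to run the same argument at the level of the linear map $\Lambda$ on $(\mathcal{M}_d,\langle\cdot,\cdot\rangle_{\mathrm{HS}})$: using the Hilbert--Schmidt adjoint $\Lambda^\dagger(\tau)=\sum_m\Pi_mP_m\tau P_m\Pi_m$ one gets $\Lambda\Lambda^\dagger(\tau)=2^{n-k}\,\Pi_1\tau\Pi_1$, the scaled orthogonal projection onto operators supported on the code space, whose spectrum is the same.
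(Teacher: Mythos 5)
Your proof is correct, and it takes a genuinely different route from the paper's. The paper (in \cref{app:qec:extra}) works with the composition $\Lambda^\dagger\circ\Lambda$: it rescales by $2^{-(n-k)}$, shows the rescaled map is a unital, Hermitian, idempotent channel, and then invokes the fixed-point/commutant theorem for unital channels (Theorem~4.25 of \cite{watrous_theory_2018}) to identify the $+1$ eigenspace as the block-diagonal operators with identical blocks in the basis $\{\ket{m,i}\}$, giving multiplicity $4^k$. You instead compute the opposite-order Gram operator $\hat{\Lambda}\hat{\Lambda}^\dagger$ in closed form: the syndrome-projector orthogonality $\Pi_m\Pi_{m'}=\delta_{mm'}\Pi_m$ kills the cross terms and the recovery identity $P_m\Pi_mP_m=\Pi_1$ (which follows directly from \cref{eq:correrrchi} and $P_m^2=\mathbb{I}$) collapses each diagonal term, so $\hat{\Lambda}\hat{\Lambda}^\dagger=2^{n-k}\,\Pi_1^*\otimes\Pi_1$, a scaled rank-$4^k$ Hermitian projector, and the spectrum is immediate. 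The key simplification is your choice of ordering: $\Lambda\Lambda^\dagger(\tau)=2^{n-k}\Pi_1\tau\Pi_1$ is manifestly a scaled orthogonal projection (onto operators supported on the code space), whereas $\Lambda^\dagger\Lambda$ projects onto a less obvious subspace whose dimension the paper must extract via the fixed-point algebra. Your argument is therefore more elementary and self-contained, and it exhibits the left singular subspace explicitly; the paper's argument costs more machinery but additionally characterizes the right singular (fixed-point) space, which is what underlies the block structure it reports. The only substantive inputs you rely on — distinct syndromes $\vec{\nu}_m$ for distinct $P_m$, $\mathrm{rank}\,\Pi_m=2^k$, and the conjugation identity — are exactly the stabilizer facts the paper itself establishes before its proofs, so nothing is missing.
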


While the above results hold true for a (noise-free) error correcting channel, we now investigate the behavior of this channel when it is composed with noise. In \cref{fig:errorcorrection}, we consider the three-qubit bit-flip code defined by stabilizer generators $\mathcal{G}:= \{ IZZ, ZZI \}$ \cite{eczoo_stab_5_1_3}. The well-behaved noise channel $\hat{\mathcal{N}}$ is defined by normalized Kraus operators only for the single-qubit bit-flip noise $\{ IXI, XII, IIX \}$. For perfect error correction, a maximum of only one bit-flip error per stabilizer round $\Pi_m$ is permitted and there are no measurement errors that corrupt ancillary measurements. This perfectly error-correcting channel with well-behaved noise $\hat{\Lambda}\hat{\mathcal{N}}$ has a spectral distribution in \cref{fig:errorcorrection}(b) and constitutes of one full round of stabilizer measurements. Subject to these well-behaved, correctable errors, we empirically observe singular value spectrum consists of a set of only 3 unique values ($\{0,1, |\mathcal{G}|\}$), and the eigenspectra consists of either 0 or 1.

In panels \cref{fig:errorcorrection}(c),(e),(g) the singular spectral radius is estimated with concentration (grey) with increasing rate of so-called `bad errors', $\epsilon'$. These bad errors correspond to three failure modes in each full round of stabilizer measurements: extra bit-flip error per round (apply $\hat{\mathcal{N}}$ more than once), multi-qubit Pauli errors ($\hat{\mathcal{N}}$ represents the full Pauli group) and amplitude damping noise ($\hat{\mathcal{N}}$ represents amplitude damping) with probability $\epsilon'$. In all cases, we again find Chernoff fluctuation term is a loose upper bound (even without the dimensionality term). Contributions from $\mu(\chi^\pm)$ are individually comparable to the singular spectral radius. However, for error-correcting channels under unital noise in (c), (e), the singular spectral radius is unchanged even as uncorrectable error rates increase, and the radius only exceeds $|\mathcal{G}|$ for amplitude damping (non-unital) noise in (g). While error-correcting channels are not unital under unital noise, these results reveal limited value for computing concentration results for the singular spectral radius under unital noise. 

While constant spectral radius with increasing uncorrectable error rates may appear counter-intuitive in \cref{fig:errorcorrection}(c),(e), we now examine algebraic multiplicity of these degenerate spectral values in \cref{fig:errorcorrection}(d),(f),(h). Since geometric multiplicity cannot exceed algebraic multiplicity, a decay in algebraic multiplicity provide some indication of how eigenspaces in error correction decay when affected by noise. The multiplicity of $\lambda \approx 1$ eigenvalue (indigo, dashed) reduces from four to two for extra bit-flip errors in (d), but decays to its minimum possible value of unity for unital Pauli errors and non-unital noise in (f), (h). Similarly, the singular value $\sigma \approx 1$ (pink, solid) decays to its minimum possible value of zero for all noise models. In contrast, multiplicity of singular values $\sigma = |\mathcal{G}|$ (red dashed) differs in decay rates for each failure mode: for Pauli errors, it is unchanged in (d) or decays only to unity in (f), while under non-unital noise, it decays to zero as singular values begin to exceed $|\mathcal{G}|$.

We numerically replicate these results for the 5-qubit code that corrects all Pauli single-qubit errors \cite{eczoo_stab_5_1_3} in \cref{app:qec:extra}. In both the bit-flip code and the five-qubit code, we empirically confirm that total number of non-zero singular values add to $4^{n - |\mathcal{G}|}=4^{k}$ in the ideal case where the error correcting channel is composed with a Pauli channel consisting of only correctable Pauli errors. We remark that comparisons of the behavior of singular or eigenvalues (e.g. the decay in their multiplicities) could be used to test robustness of different codes to realistic operating conditions.

\section{Upper bound for arbitrary singular values \label{sec:fullversion}}
 We now extend the result of \cref{th:lemma} for maximal singular values to any ordered eigenvalue in \cref{th:mainresult}. This extension may be useful for analyzing the properties of Markov processes defined by the channel superoperator, for example, in establishing spectral gaps, deriving relaxation or mixing times. 
 \begin{figure}
     \centering
    \includegraphics{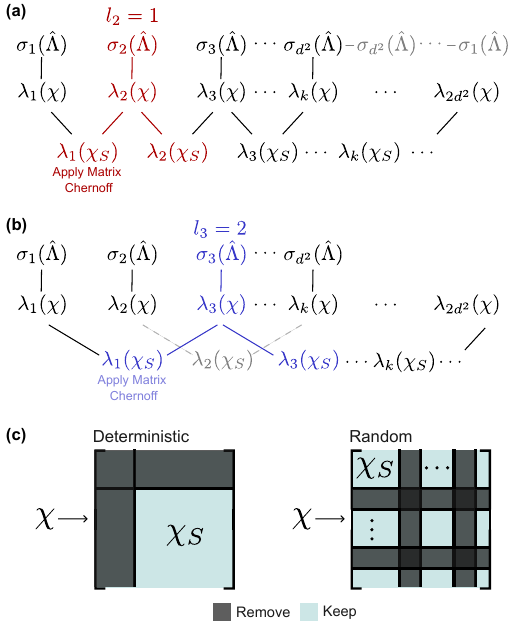}
     \caption{Interlacing property (\cref{eqn:interlacing}) for principal submatrix $\chi_S$ of dilation $\chi$ of a superoperator $\hat{\Lambda}$, in \cref{th:mainresult}, where eigen- and singular spectra are ordered from largest to smallest on the real line. Black vertical lines indicate equivalence of singular values of channels to eigenvalues of the dilation, where doubling of singular channel spectrum up to a sign is shown in gray. Slanted lines show upper and lower inequalities in \cref{eqn:interlacing}, with colored region highlighting the case for $k=1$. (a) Choosing channel singular value $i=2$ implies $l_2=1$ row/column is deleted from $\chi$ showing $\sigma_2 \in [\lambda_2(\chi_S), \lambda_1(\chi_S)]$, where $\chi_S$ is order $2d^2 -1$. (b) $i=3$ means $l_3=2$ rows/columns are deleted from $\chi$ showing $\sigma_3 \in [\lambda_3(\chi_S), \lambda_1(\chi_S)]$, where $\chi_S$ is order $2d^2 -2$. (c) Two methods by which to form a principal submatrix $\chi_S$ of $\chi$: Deterministic method systematically deletes leading rows / columns (left), while Random method uniformly samples which row / columns to delete (right).}
     \label{fig:interlace}
 \end{figure}

\begin{theorem}[Spectral concentration of singular values]\label{th:mainresult}
    Let $\hat{\Lambda} \in \mathcal{M}_{d^2}$ be a complex valued matrix with ordered singular values $ \sigma_1(\hat{\Lambda}) \geq \hdots \sigma_i(\hat{\Lambda}) \hdots  \geq \sigma_{d^2}(\hat{\Lambda})$, counted with multiplicity. Introduce constants $\theta, L \geq 0$, as in \cref{th:matrixchernoff}, as well as a function $l_i$ that returns the total number of singular values counted with multiplicity that are strictly greater than $\sigma_i$, for $i= 1, \hdots, d^2$, i.e. $l_i \in [0, d^2 -1]$, subject to the initial condition $l_j=0, \forall j $ such that $ \sigma_j = \sigma_1$. Then the $i$-th singular value, $\sigma_i(\hat{\Lambda})$ is upper bounded,
    \begin{align}
    \mathbb{E}[\sigma_{i}(\hat{\Lambda})] &\leq \mu_{i} \left( \frac{ e^\theta -1 }{\theta}\right) + \frac{2L}{\theta}\log (D_i),
    \end{align} 
    where $D_i=2d^2-l_i$, $\mu_i$ is defined by \cref{eqn:chernoffmu}, and the expectation value is computed over an ensemble of random channels.
\proof One considers the Hermitian dilation \cref{th:dilation} with $\chi : = \begin{bmatrix}
    0 & \hat{\Lambda} \\
    \hat{\Lambda}^\dagger & 0 \\
 \end{bmatrix}$, where the spectrum of $\chi$ is two copies of the singular spectrum of $\hat{\Lambda}$, denoted $\sigma(\hat{\Lambda})$, up to a sign, $\lambda(\chi) = \{ \sigma(\hat{\Lambda}), - \sigma(\hat{\Lambda})\}$, which can be observed by considering the eigenspectrum of $\chi^2$. We can order the real eigenvalues of $\chi$ and the positive spectrum will be identical to the negative spectrum. For $i=1$, we work directly with $\chi \equiv \chi_S$ but for $i>1$, we work with a principal submatrix of $\chi$. One may always extract a principal submatrix $\chi_S$ from $\chi$ by deleting $l_i$ number of rows (and correspondingly the same columns). By \cref{th:cauchyinterlace}, the eigenvalues of $\chi_S$ interlace with those of $\chi$, and a recursion gives
 \begin{align}
     \lambda_{k+l_i}(\chi_S) \leq \lambda_{k+l_i}(\chi) & \leq \lambda_{k} (\chi_S), \quad k = 1, \hdots 2d^2 - l_i. \label{eqn:interlacing}
 \end{align} In the recursion above, the index $k$ counts over the the number of (ordered) eigenvalues of $\chi_S$, i.e. $k$ has a maximal value of $2d^2 - l_i$. The index $i$ counts the ordered singular values of $\hat{\Lambda}$, while $l_i$ counts the total number of eigenvalues with multiplicity which are strictly greater than the value at $i$. If our choice of $k$ is such that $k+l_i \leq d^2$ then $\lambda_{k+l_i}(\chi)$ are positive singular values of $\hat{\Lambda}$, else if $k+l_i > d^2$, $\lambda_{k+l_i}(\chi)$ corresponds to the symmetric, negative singular values of $\hat{\Lambda}$. Focusing on the positive spectrum ($k$ such that $k + l_i \leq d^2$), singular values $\sigma_{k + l_i}(\hat{\Lambda}) \equiv \lambda_{ k + l_i}(\chi)$ are eigenvalues of $\chi$.
 
We now are concerned with establishing an upper bound for the maximum eigenvalue of $\lambda_1(\chi_S)$ (i.e. $k\equiv 1$), which is an upper bound for $\sigma_{1+l_i}(\hat{\Lambda})$. For all $i$, we see that $ \lambda_{l_i + 1}(\cdot) \equiv \lambda_{i}(\cdot)$ by the definition of $l_i$. By substitution for the positive spectrum, we see
 \begin{align}
     \lambda_{1+l_i}(\chi_S) \leq \lambda_{1+l_i}(\chi) & \leq \lambda_{1} (\chi_S),  \label{eqn:interlacing:2}\\
     \implies \lambda_{i}(\chi_S) \leq \sigma_{i}(\hat{\Lambda}) & \leq \lambda_{1} (\chi_S) \label{eqn:interlacing:3}.
 \end{align} Note that $\lambda_{1} (\chi_S)$ is the leading eigenvalue of a principal submatrix of a Hermitian matrix. Since the principal submatrices of the Hermitian matrices are also Hermitian, $\chi_S$ is Hermitian. Since \cref{th:matrixchernoff} requires non-negative Hermitian operators, we rewrite $\chi_S$ as sum of non-negative operations. Let $P^{(+)}$ ($P^{(-)}$) denote the projectors in the eigenspaces of $\chi_S$ that are non-negative (negative). Then $\chi_S^{(\pm)} := \pm P^{(\pm)} \chi$ are positive semidefinite matrices such that $\chi_S = \chi_S^{(+)} - \chi_S^{(-)}$. Using Weyl's inequality for sums of Hermitian matrices \cite{replacethiswithatextbook}, one can substitute this Hermitian sum into the right hand side of \cref{eqn:interlacing:3}, 
 \begin{align}
     \lambda_{i}(\chi_S) & \leq \sigma_{i}(\hat{\Lambda}) \leq \lambda_1(\chi_S^{(+)}) + \lambda_1(\chi_S^{(-)}), \\
    \implies \mathbb{E} [\sigma_{i}(\hat{\Lambda})] & \leq \mathbb{E} [\lambda_1(\chi_S^{(+)})] + \mathbb{E} [\lambda_1(\chi_S^{(-)})],
 \end{align} where the last step follows from taking the expectation value of both sides. We can now call \cref{th:matrixchernoff} twice for each term by setting $Y \equiv \chi_S^{(\pm)}$. For a common choice of $\theta$, one can sum the $\mu_{\mathrm{max}}$ contributions by defining 
 \begin{align}
     \mu_i := \lambda_{\mathrm{max}}(\mathbb{E}\chi_S^{(+)}) + \lambda_{\mathrm{max}}(\mathbb{E}\chi_S^{(-)}) \label{eqn:chernoffmu},
\end{align} where dependence of $\mu_i$ on $i$ alludes to the $2d^2-l_i$ order of submatrices.\qed  
\end{theorem}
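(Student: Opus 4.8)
The plan is to reduce the claim about an arbitrary ordered singular value of the random matrix $\hat{\Lambda}$ to a statement about the leading eigenvalue of a (random) Hermitian \emph{principal submatrix}, and then invoke \cref{th:matrixchernoff} term-by-term — exactly as in the proof of \cref{th:lemma}, but with the extra ingredient of \cref{th:cauchyinterlace} used to ``peel off'' the $l_i$ larger singular values. First I would pass to the Hermitian dilation $\chi$ of \cref{th:dilation}; computing $\chi^2$ block-diagonally shows $\lambda(\chi)=\{\sigma(\hat{\Lambda}),-\sigma(\hat{\Lambda})\}$, so after ordering, the top $d^2$ eigenvalues of $\chi$ are precisely $\sigma_1(\hat{\Lambda})\geq\cdots\geq\sigma_{d^2}(\hat{\Lambda})$, and in particular $\lambda_i(\chi)=\sigma_i(\hat{\Lambda})$ for $i\leq d^2$; the negative copies, being $\leq 0$, never interfere with a non-negative $\sigma_i$ (the $\sigma_i=0$ case is absorbed by the multiplicity bookkeeping, since $l_i$ then counts exactly the strictly positive singular values).

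Next I would use \cref{th:cauchyinterlace} to isolate $\sigma_i$: delete any $l_i$ matched rows/columns of $\chi$ to obtain a Hermitian principal submatrix $\chi_S$ of order $D_i=2d^2-l_i$, and apply the interlacing inequality with $k=1$, namely $\lambda_{1+l_i}(\chi)\leq\lambda_1(\chi_S)\leq\lambda_1(\chi)$. By the definition of $l_i$ — the number of singular values strictly exceeding $\sigma_i$, counted with multiplicity — position $1+l_i$ in the ordered spectrum of $\chi$ is the first one equal to $\sigma_i$, so $\lambda_{1+l_i}(\chi)=\lambda_i(\chi)=\sigma_i(\hat{\Lambda})$, giving $\sigma_i(\hat{\Lambda})\leq\lambda_1(\chi_S)$. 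This step is where the statement genuinely goes beyond \cref{th:lemma}, which is just the $i=1$, $l_i=0$, $\chi_S=\chi$ case.

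Then, since \cref{th:matrixchernoff} requires positive-semidefinite summands while $\chi_S$ is merely Hermitian, I would split $\chi_S=\chi_S^{(+)}-\chi_S^{(-)}$ into its non-negative and negative spectral parts via the eigenprojectors $P^{(\pm)}$, so $\chi_S^{(\pm)}\succeq 0$, and use Weyl's inequality (as in the proof of \cref{th:lemma}) to write $\lambda_1(\chi_S)\leq\lambda_1(\chi_S^{(+)})+\lambda_1(\chi_S^{(-)})$. Taking expectations (valid by monotonicity of $\mathbb{E}$ under an almost-sure inequality) and invoking \cref{eqn:mc:maxeigenvalue} of \cref{th:matrixchernoff} twice — once with $Y\equiv\chi_S^{(+)}$ and once with $Y\equiv\chi_S^{(-)}$, each a ``sequence'' consisting of a single random positive-semidefinite matrix of dimension $D_i$, and with the common constants $\theta,L$ — produces $\frac{e^\theta-1}{\theta}\lambda_{\max}(\mathbb{E}\chi_S^{(\pm)})+\frac{L}{\theta}\log D_i$ for each term; summing and setting $\mu_i:=\lambda_{\max}(\mathbb{E}\chi_S^{(+)})+\lambda_{\max}(\mathbb{E}\chi_S^{(-)})$ as in \cref{eqn:chernoffmu} yields the stated bound. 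Before quoting \cref{th:matrixchernoff} I would record the hypothesis check: $0\leq\lambda_{\min}(\chi_S^{(\pm)})$ by construction, and $\lambda_{\max}(\chi_S^{(\pm)})\leq\sigma_1(\hat{\Lambda})\leq L$ surely, using that the spectrum of a principal submatrix lies in $[\lambda_{\min}(\chi),\lambda_{\max}(\chi)]=[-\sigma_1,\sigma_1]$ (again by \cref{th:cauchyinterlace}) and that $L$ is a uniform bound on $\sigma_1(\hat{\Lambda})$ over the ensemble — the same standing assumption carried over from \cref{th:matrixchernoff}.

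I expect the main obstacle to be the careful bookkeeping in the second step: verifying that ``delete $l_i$ rows/columns, then take $\lambda_1$ of the submatrix'' really returns an upper bound on $\sigma_i$ for \emph{every} admissible choice of deleted indices (the interlacing theorem gives this for any choice, but one must confirm the index arithmetic $\lambda_{1+l_i}(\chi)=\sigma_i(\hat{\Lambda})$ survives degeneracies and the $\sigma_i=0$ edge case), and clarifying that the resulting bound — which depends on the deletion pattern only through $\mathbb{E}\chi_S^{(\pm)}$, hence through $\mu_i$ — is what the theorem intends, so a remark that one is free to optimize over the deletion pattern and over $\theta$ is in order. Everything else is a direct transcription of the \cref{th:lemma} argument with $\chi$ replaced by $\chi_S$ and $d'$ replaced by $D_i$.
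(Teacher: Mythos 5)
Your proposal is correct and follows essentially the same route as the paper's proof: dilate, delete $l_i$ rows/columns, apply Cauchy interlacing with $k=1$ to get $\sigma_i(\hat{\Lambda})=\lambda_{1+l_i}(\chi)\leq\lambda_1(\chi_S)$, split $\chi_S$ into positive-semidefinite parts via spectral projectors, bound with Weyl's inequality, and invoke \cref{th:matrixchernoff} twice. Your added hypothesis check ($\lambda_{\max}(\chi_S^{(\pm)})\leq L$ via interlacing) and the remark on freedom in the deletion pattern are details the paper only addresses informally in its post-proof discussion, but they do not change the argument.
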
 

We provide examples to clarify the interlacing step in \cref{eqn:interlacing}. 

\textit{Example 1 (interlacing non-degenerate singular values)}: Suppose $\hat{\Lambda}$ has non-degenerate singular values with $d^2=4$ and $\chi$ is an $8 \times 8$ matrix. The multiplicity function $l_i = i -1 $  cannot return a value greater than $d^2-1$ i.e., $l_1 = 0, l_2 = 1, l_3 = 2, l_4 = 3$. Suppose we pick $i=2$ to upper bound the second largest singular value. Then we truncate $l_2=1$ row/column from $\chi$, yielding a submatrix $\chi_S$ with row/column indices ranging from $k=1, \hdots, 7 $ in the recursion equation $\lambda_{k+1}(\chi_S) \leq \lambda_{k+1}(\chi) \leq \lambda_{k} (\chi_S)$. Since the Chernoff bound applies only to the maximal eigenvalues, we proceed by setting $k=1$ to get $\lambda_{2}(\chi_S) \leq \lambda_{2}(\chi) \leq \lambda_{1} (\chi_S)$ via interlacing (see \cref{fig:interlace}(a)).

\textit{Example 2 (interlacing degenerate singular values)}: Suppose $\hat{\Lambda}$ has degenerate singular values, e.g. with $d^2=4$, let $\sigma_1 = 1, \sigma_2 = 1, \sigma_3 = 0.9, \sigma_4 =0$. The multiplicity function now gives $l_1 = 0, l_2 = 0, l_3 = 2, l_4 = 3$. Picking $i=2$ yields no truncation, and Chernoff bound is applied only to the spectral radius of $\chi \equiv \chi_S$. To upper bound the second largest singular value instead, one must pick $i=3$. Then we truncate $l_3=2$ row/columns from $\chi$, yielding a submatrix $\chi_S$ with row/column indices ranging from from $k=1, \hdots, 6 $ in the recursion equation $\lambda_{k+2}(\chi_S) \leq \lambda_{k+2}(\chi) \leq \lambda_{k} (\chi_S)$. Again, the Chernoff bound applies only to the extremal eigenvalues, we proceed by setting $k=1$ to get $\lambda_{3}(\chi_S) \leq \lambda_{3}(\chi) \leq \lambda_{1} (\chi_S)$ via interlacing. Meanwhile choosing $k=2$ gives interlacing for the even-numbered singular values, $\lambda_{4}(\chi_S) \leq \lambda_{4}(\chi) \leq \lambda_{2} (\chi_S)$. (See \cref{fig:interlace}(b)).

\textit{Example 3 (no interlacing)}: Suppose $\hat{\Lambda}$ has all degenerate singular values e.g. $\hat{\Lambda}^\dagger \hat{\Lambda} = I$ and singular value of $1$ has multiplicity of $d^2$. Then $l_i = 0, \quad \forall i$ and no non-trivial truncation of $\chi$ exists.

We make some remarks on free parameters and limitations of the proof. Firstly, it is not clear how the principal submatrix should be extracted before applying the Cauchy interlace theorem. For example, one strategy is to extract $l_i$ number of columns/rows from the top left of every matrix (`Deterministic'), while another strategy is to randomly select which rows/columns need to be deleted (`Random'), see \cref{fig:interlace} (c) for an illustration. We find that the latter strategy appears to lead to a tighter matrix Chernoff upper bound in applications. Secondly, a limitation of our proof is that it cannot be used to find the matrix Chernoff lower bound for the minimal eigenvalue of $\chi$, since we decompose a Hermitian matrix into a sum of positive semidefinite terms in the final steps. The decomposition of a Hermitian matrix to positive semidefinite components is necessary to invoke the matrix Chernoff concentration result. Thirdly, like classical Chernoff matrix inequalities, there is some freedom to control sharpness of these bounds for specific applications e.g. by choosing free parameters $\theta, \epsilon$. However, in many applications, the Chernoff bounds are found to be non-sharp for matrices with decaying spectra due to the $L\log(D)$ term. This issue arises as the dimension, $D_i$, multiplied by the maximum eigenvalue, is used to upper bind the trace to derive \cref{th:matrixchernoff} and leads to non-sharp bounds for decaying spectra. Proposed improvements include replacing the dimensional factor $D_i$ by a smaller `intrinsic' dimension, while in other cases, it may be possible to omit $\frac{L}{\theta}\log{D_i}$ on an application-specific basis \cite{tropp_introduction_2015}. Finally, the proof does not claim a computational advantage. One may directly estimate sample averages and variances of eigen- or singular spectral values for a random matrix ensemble. However, one can additionally guarantee spectral concentration by using \cref{th:mainresult}.

From an applications perspective, we note that the Hermitian dilation preserves the Schatten infinity norm, or the spectral norm, $||\chi||_\infty = ||\hat{\Lambda}||_\infty = ||\Lambda||_{2,2} =\lambda_1(\chi)$, where second last inequality occurs by definition (cf. equation 2.14 in Ref. \cite{fukuda_concentration_2024}) and the last equality represents the spectral radius of $\chi$ \cite{tropp_introduction_2015}. It is left to future work how concentration of ordered singular values for random superoperators in \cref{th:mainresult} could help in defining statistical properties of induced or completed bounded Schatten $\infty$-norms that are typically used to understand distance between superoperators of channels. Indeed our results might be useful in deriving bounds for estimating induced Schatten norms \cite{JohnNorms2025Jul} when channel superoperators are random. Another set of applications of concentration of singular values could be found in characterizing relaxation times and spectral gaps of classical Markov chains, for instance, for irreversible chains \cite{Wolfer2022Sep,chatterjee_spectral_2025}. Typically one seeks an upper (lower) bound for the second-largest (largest) singular value for non-normal or non-square channels to establish the existence of a gap. While determining when singular values of classical Markov chains may apply to quantum Markov processes is beyond the scope of present work, we provide numerical evidence that\cref{th:mainresult} can be used to estimate spectral gaps for the simpler case of ergodic/mixing quantum Markov processes in \cref{app:ergodicqc}.

\section{Conclusion \label{sec:conclusion}}

In this manuscript, we propose a theoretical framework to analyze singular values of ensembles of random superoperators representing imperfect quantum protocols. We use matrix Chernoff inequalities to upper bind singular values of random operators over this ensemble and find that singular spectra can show signatures of different failure modes of quantum computation under realistic operating conditions. Our theoretical framework is generally applicable to ensembles of random complex matrices, and therefore can be used to study random superoperators that converge to valid quantum channels only under the idealized limit that failure probability goes to zero.

Our proposed framework is additionally used to characterize singular values for imperfect error mitigation and error correction. For all error mitigation applications where ideal quantum operations are unitary, we empirically confirm that singular values converge to unity from both above and below as the number of samples increase. For failure modes related to both learning error and non-unital noise, we find that Chernoff concentration provides upper bounds that scale with the strength of unaddressable errors in the protocol. For error correction, we provide a characterization of the singular spectrum in the absence of noise, where magnitude and multiplicities of singular values are linked to degrees of freedom in $[n,k]$ codes. We find that Chernoff upper bound for maximal singular values grows with increasingly unaddressable error only for non-unital noise. Under unital noise, the Chernoff upper bound is constant but singular value multiplicities are seen to decay with noise strength. 

These results suggest that singular values may provide information that could be linked to operational metrics for trusting the output of noisy quantum computers when protocols for combating noise are imperfect. Our work paves the way for new theoretical tools that can accommodate a wider range of how quantum computation may fail on hardware. Future research directions could include looking at the impact of our work on channel norms, approximation errors, or establishing spectral gaps for families of quantum Markov processes. All of these avenues are exciting directions for studying spectral concentration in channels.\\
\\
\textbf{Acknowledgments:} S.K., K.B., L.J.H and S.S. are supported by the ARC Centre of Excellence for Engineered Quantum Systems (CE17010000) with S.K., K.B. additionally supported via Deborah Jin Fellowships. R.S.G. is supported by UQ's Queensland Digital Health Center via funding from UQ’s Health Research Accelerator (HERA) initiative. Authors thank Luke Govia and Hakop Pashayan for useful discussions.\\
\textbf{Competing interests:} All authors declare no financial or non-financial competing interests.\\
\textbf{Data and code availability:} All scripts and data are available upon reasonable request and will be made publicly available at time of publishing.\\
\clearpage
\onecolumngrid
\appendix

\section*{Supplemental information}

\section{ Quantum channels \label{app:channels}} 
A quantum channel is a completely positive map represented by the following operator sum, 
    \begin{align}
    \Lambda (\rho) := \Sigma_{i=1}^\kappa A_i \rho A_i^\dagger, \quad \Sigma_i A_i^\dagger A_i \leq \mathbb{I}, \label{eqn:krausrep2}
\end{align} where $\rho \in \mathcal{S}_d$ is an input quantum state, and $A_i \in \mathcal{M}_d $ are Kraus operators for a total number of $\kappa$ terms in the sum. The channel is trace preserving $ \iff \Sigma_i A_i^\dagger A_i = \mathbb{I}$ or unital $ \iff \Sigma_i A_i A_i^\dagger = \mathbb{I}$, where $\mathbb{I}$ is the identity. The channel $\Lambda$ has a matrix representation $\hat{\Lambda} \in \mathcal{M}_{d^2}$, with matrix elements given by $\hat{\Lambda}_{ij}:= \langle B_i |\Lambda |B_j\rangle$ for an orthonormal basis $\{B_i\}$ of $\mathcal{M}_{d^2}$, i.e. the Hilbert-Schmidt norm in regular operator space. This basis will refer to matrix units in this work. One obtains superoperator representation from \cref{eqn:krausrep2} as 
   \begin{align}
        \hat{\Lambda} := \Sigma_{i=1}^\kappa  A_i^* \otimes A_i,  \label{eqn:superoprep2}
    \end{align} where ${}^*$ denotes entry-wise conjugation.
The eigenvalues of a complex square matrix $\hat{\Lambda} \in \mathcal{M}_{d^2}$ are the set of complex numbers that satisfy $\mathrm{det}(\lambda\mathbb{I}_{d^2} - \hat{\Lambda})$. We use $\lambda(X)$ to denote the full set of eigenvalues of the operator $X$, as well as to denote the $i$-th ordered eigenvalue $\lambda_i(X)$. 

For superoperators $\hat{\Lambda}$ that are completely positive and trace-preserving (CPTP), it is known that the eigenspectra are contained within the unit disc ($ \forall \lambda \in \lambda(\hat{\Lambda}), \quad |\lambda| \leq 1$) \cite{wolf_tour_2012, watrous_theory_2018}. Eigenvalues with an absolute magnitude of unity, $|\lambda|=1$, are called peripheral eigenvalues. In particular, a superoperator representing a quantum CPTP map is similar to its Jordan normal form $T$ for some unitary $U$, where each Jordan block is decomposed into diagonal and nilpotent matrices. From \cite{wolf_tour_2012}, one may decompose a superoperator for a CPTP map as 
\begin{align}
    \hat{\Lambda} = U T U^\dagger, \quad T: &= \sum_k \lambda_k P_k + N_k, \label{eqn:jordandecomp}
\end{align} where $\lambda_k$ is the $k$-the eigenvalue with a Jordan block given by $\lambda_k P_k + N_k$, $P_k$ are projectors satisfying $P_k^2 = P_k$, $\sum_k P_k = \mathbb{I}$, $P_j N_k = \delta_{k,j}N_k$ and $P_j P_k = \delta_{k,j}P_k$. Here, it is not required that $P_k = P_k^\dagger$. Within this peripheral spectrum, there exists at least one eigenvalue $\lambda_1=1$, and that the eigenspace of $\lambda_1=1$ contains a positive semi-definite element i.e. the span of potentially nonphysical quantum states contains at least one valid quantum state\cite{wolf_tour_2012, kukulski_generating_2021}. For superoperators of CPTP maps, peripheral eigenvalues have trivial Jordan blocks, namely, $N_k =0$ for $|\lambda_k| = 1$ and the projection to the peripheral spectrum is a completely positive map given by $\sum_{k: |\lambda_k| = 1} P_k$ \cite{wolf_tour_2012}. This term allows us to define a projector that projects us into the eigenspace of the nilpotent spectrum as, 
\begin{align}
   \mathcal{N} := \mathbb{I}  -\sum_{k: |\lambda_k| = 1} P_k. \label{eqn:nilpotent}
\end{align} We mention in passing that peripheral eigenspectra and their eigenspaces helps to characterize the behavior of quantum channels as classical discrete-time Markov chains. For example, a channel is ergodic if $\lambda_1=1$ has an eigenvector that corresponds to a unique, positive definite quantum state. Of ergodic channels, a channel is further mixing if $\lambda_1=1$ is the only peripheral eigenvalue with multiplicity one \cite{burgarth_ergodic_2013}. 

Meanwhile, singular values of quantum channels, $\sigma_i$ of $\hat{\Lambda}$ are the square-root of eigenvalues of the products,
\begin{align}\label{eq:channelsv}
    \sigma_i(\hat{\Lambda}) := \sqrt{\lambda_i(\hat{\Lambda}^\dagger \hat{\Lambda})} = \sqrt{\lambda_i(\hat{\Lambda} \hat{\Lambda}^\dagger)}, \quad \forall i.
\end{align} The right (left) singular vectors are the eigenvectors of $\hat{\Lambda}^\dagger \hat{\Lambda}$ ($\hat{\Lambda} \hat{\Lambda}^\dagger$). We use $\sigma(X)$ to denote the full set of singular values of operator $X$, and the $i$-th ordered singular in this set as $\sigma_i(X)$. When $\hat{\Lambda}$ is normal ($[\hat{\Lambda}, \hat{\Lambda}^\dagger]=0$), then singular values reduce to the absolute values of $\hat{\Lambda}$. Normal operators include Hermitian, unitary, and projection operators, but the superoperator representing channels need not be normal generally. For completely-positive trace-preserving maps, the maximum singular value can be between unity and the dimension of the Hilbert space, and is exactly unity for unital channels \cite{watrous_theory_2018}. For random CPTP superoperators that may always be written in Kraus form, maximum singular values are unity by renormalization of Kraus terms to satisfy trace preservation, namely, that the dual channel is unital (cf. random Kraus maps in Ref.~\cite{kukulski_generating_2021}).

\section{ Error mitigated channels \label{app:mitigation:extra}} 

Error mitigation using probabilistic error cancellation \cite{vandenBerg2023Aug,Gupta2024Jun} obtains estimates of noise-free expectation values from an ensemble of noisy quantum circuits. As illustratively shown in the inset of \cref{fig:mitigation}(a), it is assumed that intrinsic noise for an ideal Clifford gate, $U$, on $n$ qubits can reshaped to be a Pauli channel via twirling $\mathcal{P}$. The twirled noisy gate can be represented as,
\begin{align}
   \hat{\Lambda} := \sum_{j=1}^{4^n} c_j (P_j^* U^* \otimes  P_j U),
\end{align} where noise information $c_j$ after twirling is learned via benchmarking experiments, and used to construct an appropriate distribution $\{|c_i^{inv}|/\gamma\}_i$ over the Pauli group.  The factor, $\gamma:= \sum_{i=1}^{4^n} |c_i^{inv}|$, is used to normalize the magnitudes of the coefficients, where the index $i$ runs over the Pauli group elements, and $c_i^{inv}$ are computed using noise learning experiments as detailed in Ref.~\cite{vandenBerg2023Aug}. Inserting a sampled Pauli from this distribution,
\begin{align}
    \hat{\Lambda}^{-1} := \gamma \sum_{i=1}^{4^n}\mathrm{sgn}(c_i^{inv}) \frac{|c_i^{inv}|}{\gamma} P_i^* \otimes P_i,
\end{align} before the noisy gate constitutes as performing the mitigation step. In the above, a random Pauli operator $P_i$ is sampled with a probability $\frac{|c_i^{inv}|}{\gamma}$ and inserted before the unitary gate $U$ in the circuit. The overall scale $\gamma$ and the $\pm$ signs denoted by $\mathrm{sgn}(c_i^{inv})$ are used to re-scale computed quantities in post-processing. In practice, one only has access to an empirical estimator of the true channel $\hat{\Lambda} \hat{\Lambda}^{-1}$, where the estimator averages over the number of samples in the mitigation ensemble, as well as the number of samples used during twirling. Sample sizes depend on heuristic analysis on the amount of data required for reaching a desired low-variance in noise-free expectation values. We denote this estimator $\hat{\zeta}$, 
\begin{align}
    \hat{\zeta}_\kappa &:= \mathbb{E}_\kappa[\hat{\Lambda} \hat{\Lambda}^{-1}] \\
    &= \gamma \left[ 
    \sum_{i,j=1}^{\kappa}  c_j \mathrm{sgn}(c_i^{inv}) \frac{|c_i^{inv}|}{\gamma} (P_j^* U^* P_i^* \otimes  P_j U P_i)
    \right],
\end{align} where the average over $\kappa$ represents the number of random circuits used to compute the combined superoperator for a mitigated, noisy unitary. For simplicity, the number of twirled samples (index $j$), and the number of samples used for mitigation (index $i$) are chosen to be equal ($\kappa$). The superoperator representing this sampling procedure converges to the ideal unitary, $\hat{\zeta} \to \hat{U}$ in the high data limit ($\kappa \to \infty$). In terms of singular values, one anticipates that perfect mitigation implements the ideal unitary, where the singular spectrum is $\sigma(\hat{U}\hat{U}^\dagger) = \{1\}$ with multiplicity of $d^2$.

\section{ Error corrected channels \label{app:qec:extra}} 

In this section, we first prove two results
about the spectrum of a perfect $[n,k,d]$ error-correcting channel without noise. The first results concerns the eigenspectrum of the perfect error-correcting channel whereas the second one characterizes the singular-value spectrum of the same. Next we empirically characterize the singular and eigenspectrum of QEC channels under noise for both the three-qubit repetition code presented in the main text, as well as the five-qubit code for correcting any single-qubit Pauli error. 

Before proving theoretical results, we first discuss some properties of a qubit-stabilizer code that will be used in the proofs. Notice that any $[n,k]$ code divides the $2^n$-dimensional Hilbert space into $2^{n-k}$ orthogonal subspaces of dimensions $2^k$ each.
One can find $2^{n-k}$ Pauli operators $\{P_m\}$ with $P_1:=\mathbb{I}$ such that for each stabilizer generator $S_j\in \mathcal{G}$, we have
\begin{equation}\label{eq:correrrchi}
    P_mS_j = (-1)^{\nu_m(j)}S_jP_m,
\end{equation}
where the binary string $\vec{\nu}_m := (\nu_m(1),\dots, \nu_{m}(n-k))$ is distinct for each $P_m$.
For a given $P_m$, $\vec{\nu}_m$ also encodes the measurement outcomes if the stabilizer generators $S_j$ are measured after an encoded state is acted upon by the error $P_m$. Note that $P_1=\mathbb{I}$ represents the error-free case where all the stabilizer measurements return 0 and the encoded state does not leave the code space. Furthermore, we can show that the $2^{n-k}$ orthogonal subspaces can be defined as
\begin{equation}\label{eq:cospace}
    \Pi_m := \frac{1}{2^{n-k}}\prod_{j=1}^{n-k}\left[ I+(-1)^{\nu_m(j)}S_j\right], \quad m\in\{1,\dots,2^{n-k}\}. 
\end{equation}
In other words, an error $P_m$ takes an encoded state from the code space $\Pi_1$ to $\Pi_m$.
One can construct a perfect error-correcting channel $\Lambda$ from $\{P_m\}$ as given in Eq.~\eqref{eq:recover}. The recovery operation corresponding to the syndrome $\vec{\nu}_m$ is $P_m$. Given such a $\Lambda$, the set $\{P_m\}$ can be referred to as the set of correctable errors of the corresponding QEC code as it
is easy to verify that any error from this set will be perfectly corrected after performing a round of error correction.

Let $\{\ket{1,i}\}_{i=0}^{2^k-1}$ define an orthonormal basis for the code space characterized by $\Pi_1$.
As a consequence of the fact $P_m$ and $P_{m'}$ take a state in $\Pi_1$ to mutually orthogonal subspaces $\Pi_m$ and $\Pi_{m'}$ respectively,
it follows that 
\begin{equation}\label{eq:cospaceortho}
    \bra{1,i'}P_{m'}P_m\ket{1,i} = \delta_{m,m'}\delta{i,i'},
\end{equation}
Therefore, the set 
\begin{equation}\label{eq:cospacebasis}
    \{ \ket{m,i}:= P_m\ket{1,i}\}
\end{equation}
defines an orthonormal basis for the full $2^n$-dimensional Hilbert space and
\begin{equation}\label{eq:projexp}
    \Pi_m = \sum_{i_1,i_2=1}^{2^k}\ket{m,i_1}\bra{m,i_2},\quad \forall\, m\in \{1,\dots, 2^{n-k}\}.
\end{equation}

Having established the above properties of a stabilizer code, we now prove the first result of the section.

\begin{theorem} \label{app:qec:eigenvals}
    Consider an $[n,k]$ stabilizer code with the set of stabilizer generators $\mathcal{S}=\{S_j\}_{j=1}^{n-k}$, the set of correctable errors $\{P_m\}_{m=1}^{2^{n-k}}$ with $P_1=\mathbb{I}$ and the perfect error-correcting channel
    \begin{equation}
        \Lambda(\rho) = \sum_m P_m^\dagger \Pi_m \rho \Pi_m P_m.
    \end{equation}
   The eigenspectrum of $\Lambda$ comprises 0 and 1 with multiplicities $4^n-4^k$ and $4^k$ respectively.
\end{theorem}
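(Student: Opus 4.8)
The plan is to work directly in the superoperator (matrix) representation of $\Lambda$ given by \cref{eq:recover}, namely $\hat{\Lambda} = \sum_{m=1}^{2^{n-k}} (P_m\Pi_m)^* \otimes (P_m\Pi_m)$, and to exhibit an explicit eigenbasis. The natural coordinates are the operator basis $\{\ket{m,i}\bra{m',i'}\}$ built from the orthonormal basis $\{\ket{m,i} = P_m\ket{1,i}\}$ of the full $2^n$-dimensional Hilbert space established in \cref{eq:cospacebasis}--\cref{eq:projexp}; there are $(2^n)^2 = 4^n$ such operators, matching the dimension of the superoperator space.

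\textbf{Key steps.} First I would compute the action of the map $\Lambda$ on a general basis operator $E_{(a,j),(b,l)} := \ket{a,j}\bra{b,l}$. Using $\Pi_m P_m \ket{a,j} = P_m \Pi_a \ket{a,j}$ — here I use that $P_m$ maps the code space into the subspace $\Pi_m$, equivalently $\Pi_m P_m = P_m \Pi_1$ and more generally the orthogonality relation \cref{eq:cospaceortho} — one finds $\Pi_m P_m \ket{a,j} = \delta_{m,a} \ket{a,j}$ (up to relabelling, since $P_m P_a$ maps $\ket{1,j}$ into $\Pi$-block indexed by the syndrome of $P_mP_a$, which equals block $1$ iff $m=a$). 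Hence $P_m \Pi_m \ket{a,j}\bra{b,l} \Pi_m P_m = \delta_{m,a}\delta_{m,b}\, \ket{a,j}\bra{b,l}$. Summing over $m$ gives $\Lambda(E_{(a,j),(b,l)}) = \delta_{a,b}\, E_{(a,j),(b,l)}$. So every basis operator is an eigenvector: eigenvalue $1$ when the two block-indices agree ($a=b$) and eigenvalue $0$ otherwise. Second I would count: the $a=b$ sector has $2^{n-k}$ choices of block and $(2^k)^2 = 4^k$ choices of $(j,l)$ within the block, giving multiplicity $2^{n-k}\cdot 4^k = 2^{n+k}$ for eigenvalue $1$. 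Wait — this should be reconciled with the claimed $4^k$; the point is that $P_m\Pi_m$ has rank $2^k$, so $\Pi_m$ projects onto a $2^k$-dimensional space, and the image of $\Lambda$ is spanned only by operators supported within a \emph{single} code block after recovery, i.e. operators of the form $\ket{1,j}\bra{1,l}$ pulled back — so the correct count is that the recovery collapses all blocks to block $1$, yielding $\dim(\mathrm{image}) = 4^k$ and eigenvalue-$1$ multiplicity $4^k$, with the remaining $4^n - 4^k$ eigenvalues equal to $0$. I would verify this by instead diagonalising $\Lambda$ as a projector-like object: show $\Lambda^2 = \Lambda$ (idempotent) directly from $\Pi_m \Pi_{m'} = \delta_{m,m'}\Pi_m$ and $P_m^\dagger P_m = \mathbb{I}$, so all eigenvalues are $0$ or $1$, and then compute $\mathrm{Tr}\,\hat\Lambda = \sum_m |\mathrm{Tr}(P_m\Pi_m)|^2$-type expression, or more cleanly $\mathrm{Tr}\,\hat\Lambda = \sum_m |\mathrm{Tr}(\Pi_m)|^2 = \sum_m (2^k)^2 = 2^{n-k}\cdot 4^k = 2^{n+k}$, which forces the eigenvalue-$1$ multiplicity to be $2^{n+k}$, not $4^k$ — so I would need to recheck the normalisation convention in \cref{eq:recover} (whether $P_m\Pi_m$ or $2^{-(n-k)/2}$-normalised projectors are intended) to land on the stated $4^k$.

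\textbf{Main obstacle.} The genuine subtlety — and the step I expect to require the most care — is exactly this bookkeeping of multiplicities against the trace/rank of $\hat\Lambda$, because the answer depends delicately on how the projectors $\Pi_m$ in \cref{eq:recover} are normalised and on whether the sum over $m$ runs over all $2^{|\mathcal{G}|} = 2^{n-k}$ syndromes with distinct recovery operators. The clean argument is: (i) show $\Lambda$ is an orthogonal projection in Hilbert--Schmidt inner product onto the subspace $\mathcal{C}$ of operators of the form $\sum_{m}\,\ket{1,j}\bra{1,l}$ transported consistently — equivalently the image is isomorphic to $\mathcal{M}_{2^k}$, the logical operator algebra — giving multiplicity $4^k$ for eigenvalue $1$; (ii) conclude the kernel has dimension $4^n - 4^k$. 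I would make (i) rigorous by writing $\Lambda(\rho) = \sum_m P_m\Pi_m\rho\Pi_m P_m$ and checking it is self-adjoint ($\langle \Lambda(\rho),\tau\rangle = \langle\rho,\Lambda(\tau)\rangle$ using $\Pi_m^\dagger = \Pi_m$, $P_m^\dagger = P_m$ for Pauli $P_m$) and idempotent, so it is an orthogonal projection whose rank equals its trace; then the trace computation — done carefully with the \emph{correct} normalisation so that each $\Pi_m P_m$ restricted to block $m$ is a partial isometry of rank $2^k$ — yields exactly $4^k$. The rest is routine.
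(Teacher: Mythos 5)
Your overall route---show the map is idempotent, then identify its image and kernel on the block operator basis $\{\ket{a,j}\bra{b,l}\}$---is the same as the paper's, but the execution contains errors that leave you doubting the (correct) statement rather than proving it. The central miscomputation is the claimed relation $\Lambda(\ket{a,j}\bra{b,l}) = \delta_{a,b}\,\ket{a,j}\bra{b,l}$. Since $P_m\Pi_m = \Pi_1 P_m$, one has $P_m\Pi_m\ket{a,j} = \delta_{m,a}\ket{1,j}$, so the correct action is $\Lambda(\ket{a,j}\bra{b,l}) = \delta_{a,b}\,\ket{1,j}\bra{1,l}$: the recovery collapses every diagonal block onto the code block, and the operators $\ket{m,j}\bra{m,l}$ with $m\neq 1$ are \emph{not} eigenvectors. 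The eigenvalue-$1$ space is only the $4^k$-dimensional span of $\{\ket{1,j}\bra{1,l}\}$, while the kernel is spanned by the off-block operators together with the differences $\ket{m,j}\bra{m,l}-\ket{1,j}\bra{1,l}$ for $m\neq 1$, giving $4^n-4^k$; this is exactly how the paper counts the multiplicities. Your subsequent trace check is also wrong: $\mathrm{Tr}\,\hat\Lambda = \sum_m |\mathrm{Tr}(P_m\Pi_m)|^2$, not $\sum_m|\mathrm{Tr}(\Pi_m)|^2$, and since $\mathrm{Tr}(P_m\Pi_m)=\mathrm{Tr}(\Pi_1 P_m)=2^k\delta_{m,1}$ by \cref{eq:cospaceortho}, the trace is $4^k$, consistent with the theorem. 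There is no normalisation ambiguity in \cref{eq:recover} to ``recheck''; the discrepancy you flag is an artifact of dropping the $P_m$ inside the trace.

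Your proposed ``clean argument'' also fails at step (i): $\Lambda$ is \emph{not} self-adjoint with respect to the Hilbert--Schmidt inner product, because its Kraus operators $P_m\Pi_m$ are not Hermitian ($P_m\Pi_m=\Pi_1P_m$ whereas $\Pi_mP_m=P_m\Pi_1$); e.g.\ $\langle \Lambda(\ket{2,i}\bra{2,j}),\ket{1,i}\bra{1,j}\rangle = 1$ while $\langle \ket{2,i}\bra{2,j},\Lambda(\ket{1,i}\bra{1,j})\rangle = 0$. So $\Lambda$ is an oblique, not orthogonal, projection; what survives is only the weaker fact that any idempotent is diagonalizable with spectrum $\{0,1\}$ and eigenvalue-$1$ multiplicity equal to its trace, which (with the corrected trace $4^k$) does give the theorem. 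Note also that your idempotency sketch needs the intertwining relation $P_m\Pi_m=\Pi_1P_m$ (or equivalently \cref{eq:cospaceortho}); the facts $\Pi_m\Pi_{m'}=\delta_{m,m'}\Pi_m$ and $P_m^\dagger P_m=\mathbb{I}$ alone do not suffice, because in $\Lambda\circ\Lambda$ the projectors are separated by the recovery operators.
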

\begin{proof}
    First we show that $\Lambda$ satisfies $\Lambda\circ \Lambda = \Lambda$. This implies that the eigenvalues of $\Lambda$ are either 0 or 1. To see this we first note that $\Lambda$ can be rewritten as
    \begin{equation}\label{eq:recoveryalt}
        \Lambda(\rho) = \sum_m \Pi_1 P_m\rho  P_m\Pi_1
    \end{equation}
    using Eqs.~\eqref{eq:correrrchi} and~\eqref{eq:cospace} as well as the fact that $P_m$s are Hermitian.  Expand the projector $\Pi_1$ using Eq.~\eqref{eq:projexp}
to observe    
\begin{align}
        \Lambda\circ \Lambda (\rho) &= \sum_{m_1,m_2}\Pi_1P_{m_1} \left(\Pi_1P_{m_2} \rho P_{m_2}\Pi_1 \right) P_{m_1}\Pi_1 \nonumber \\&=\sum_{m_1,m_2}\sum_{i_1,i_2,i_3,i_4}\ket{1,i_1}\bra{1,i_1}P_1P_{m_1}\ket{1,i_2}\bra{1,i_2}P_{m_2}\rho P_{m_2}\ket{1,i_3}\bra{1,i_3}P_1P_{m_1}\ket{1,i_4}\bra{1,i_4} \nonumber\\
        &= \sum_{m_1,m_2}\sum_{i_1,i_2,i_3,i_4}\delta_{m_1,1}\delta_{i_1,i_2}\ket{1,i_1}\bra{1,i_2}P_{m_2}\rho P_{m_2}\delta_{m_1,1}\delta_{i_3,i_4}\ket{1,i_3}\bra{1,i_4} = \sum_{m_2}\sum_{i_1,i_3}\ket{1,i_1}\bra{1,i_1}P_{m_2}\rho P_{m_2}\ket{1,i_3}\bra{1,i_3}\nonumber\\
        &= \sum_{m_2}\Pi_1P_{m_2}\rho P_{m_2}\Pi_1 = \Lambda(\rho),
    \end{align}
    as desired. We have used the orthogonality relation given in Eq.~\eqref{eq:cospaceortho} in deriving the above.
    
    Now we compute the multiplicities of the 0 and 1 eigenvalues, respectively. 
    We do this by constructing a linearly independent set of eigenbasis for $\Lambda$ that spans the entire bounded-operator space.
    For this, we recall from Eq.~\eqref{eq:cospacebasis} that $\{\ket{m,i}\}$ forms an orthonormal basis of the full Hilbert space.
    Therefore, $\{\ket{m_1,i_1}\bra{m_2,i_2}\}$ form a basis for the corresponding bounded-operator space. We now construct the eigenvectors corresponding to the different eigenvalues of $\Lambda$. For $2^k$ basis vectors of the form $\{\ket{1,i_1}\bra{1,i_2}\}$ with support in the code space, we have
    \begin{align}
        \Lambda(\ket{1,i_1}\bra{1,i_2}) &= \sum_m \Pi_1 P_m \ket{1,i_1}\bra{1,i_2} P_m \Pi_1 = \sum_m\sum_{i_3,i_4}\ket{1,i_3}\bra{1,i_3}P_m P_1\ket{1,i_1}\bra{1,i_2}P_1 P_m \ket{1,i_4}\bra{1,i_4} \nonumber \\
        &= \sum_m\sum_{i_3,i_4}\delta_{i_1,i_3}\delta_{1,m} \delta_{i_2,i_4}\delta_{1,m}\ket{1,i_3}\bra{1,i_4} = \ket{1,i_1}\bra{1,i_2}.
    \end{align}
    That is, these $2^k$ basis vectors belong to the +1 eigenspace of $\Lambda$. Next consider a basis vector of the form $\{\ket{m_1,i_1}\bra{m_2,i_2}\}$ with $m_1\neq m_2$ and examine $\Lambda$'s action on this operator,
    \begin{align}
        \Lambda(\ket{m_1,i_1}\bra{m_2,i_2}) &= \sum_m \Pi_1 P_m \ket{m_1,i_1}\bra{m_2,i_2} P_m \Pi_1 = \sum_m\sum_{i_3,i_4}\ket{1,i_3}\bra{1,i_3}P_m P_{m_1}\ket{1,i_1}\bra{1,i_2}P_{m_2} P_m \ket{1,i_4}\bra{1,i_4} \nonumber \\
        &= \sum_m\sum_{i_3,i_4}\delta_{i_1,i_3}\delta_{m_1,m} \delta_{i_2,i_4}\delta_{m_2,m}\ket{1,i_3}\bra{1,i_4} = \bm{0},
    \end{align}
    as $\delta_{m_1,m} \delta_{m_2,m}$  is always zero because $m_1\neq m_2$. 
    Finally we consider basis vectors of the form  $\{\ket{m,i_1}\bra{m,i_2}\}$ for $m\neq 0$. Here we notice
     \begin{align}
        \Lambda(\ket{m,i_1}\bra{m,i_2}) &= \sum_{m_1} \Pi_1 P_{m_1} \ket{m,i_1}\bra{m,i_2} P_{m_1} \Pi_1 = \sum_{m_1}\sum_{i_3,i_4}\ket{1,i_3}\bra{1,i_3} P_{m_1}P_m\ket{1,i_1}\bra{1,i_2}P_{m} P_{m_1} \ket{1,i_4}\bra{1,i_4} \nonumber \\
        &= \sum_{m_1}\sum_{i_3,i_4}\delta_{i_1,i_3}\delta_{m_1,m} \delta_{i_2,i_4}\delta_{m_1,m}\ket{1,i_3}\bra{1,i_4} =  \ket{1,i_1}\bra{1,i_2}.
    \end{align}
    This means the operator $\ket{m,i_1}\bra{m,i_2}-\ket{1,i_1}\bra{1,i_2}$ for $m\neq 1$ also belongs to the null space of $\Lambda$. Thus the set of linearly independent basis
    \begin{equation}
        \{\ket{1,i_1}\bra{1,i_2}, \ket{m_1,i_1}\bra{m_2,i_2} , (\ket{m,i_1}\bra{m,i_2}-\ket{1,i_1}\bra{1,i_2})\}, i_1,i_2\in\{1,\dots 2^k\}, m,m_1,m_2 \in\{1,\dots 2^{n-k}\},m_1\neq m_2,m\neq 0,
    \end{equation}
 also forms an eigenbasis of $\Lambda$.  
    We conclude that the +1 eigenvalue has a multiplicity of $4^k$ and the 0 eigenvalue has a multiplicity of $4^n-4^k$.
\end{proof}
Next we characterize the singular-value spectrum of $\Lambda$.

\begin{theorem} \label{app:qec:singularvalues}
    Consider an $[n,k]$ stabilizer code with the set of stabilizer generators $\mathcal{S}=\{S_j\}_{j=1}^{n-k}$, the set of correctable errors $\{P_m\}_{m=1}^{2^{n-k}}$ with $P_1=\mathbb{I}$ and the perfect error-correcting channel
    \begin{equation}
        \Lambda(\rho) = \sum_m P_m^\dagger \Pi_m \rho \Pi_m P_m.
    \end{equation}
   The singular-value spectrum of $\Lambda$ comprises the values $2^{\frac{n-k}{2}}$ and 0 with multiplicities $4^k$ and $4^n-4^k$ respectively.
\end{theorem}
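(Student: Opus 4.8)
The plan is to compute the positive operator $\Lambda^\dagger\circ\Lambda$ explicitly in a convenient orthonormal operator basis and simply read off its eigenvalues; the singular values of $\Lambda$ are then their square roots. First I would reuse the simplified Kraus form derived in the proof of \cref{app:qec:eigenvals}, namely $\Lambda(\rho)=\sum_m \Pi_1 P_m\rho P_m\Pi_1$ (\cref{eq:recoveryalt}), which follows from \cref{eq:correrrchi}, \cref{eq:cospace} and the Hermiticity of the $P_m$. Taking the Hilbert--Schmidt adjoint gives $\Lambda^\dagger(\sigma)=\sum_m P_m\Pi_1\sigma\Pi_1 P_m$, hence $\Lambda^\dagger\circ\Lambda(\rho)=\sum_{m,m'}P_{m'}\Pi_1 P_m\rho P_m\Pi_1 P_{m'}$.

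Next I would pass to the orthonormal operator basis $\{\ket{m_1,i_1}\bra{m_2,i_2}\}$ built from \cref{eq:cospacebasis}, with $m$ labelling the $2^{n-k}$ syndromes and $i\in\{1,\dots,2^k\}$ labelling a basis of the code space $\Pi_1$. The single algebraic fact that does all the work is $\Pi_1 P_{m'}P_m\Pi_1=\delta_{m,m'}\Pi_1$: indeed $P_{m'}P_m$ carries the code space into the coset $\Pi_{m\oplus m'}$, which is orthogonal to $\Pi_1$ unless $m=m'$, and then $P_m^2=\mathbb{I}$. From this, $\Lambda$ sends each ``diagonal'' basis operator $\ket{m,i_1}\bra{m,i_2}$ to $\ket{1,i_1}\bra{1,i_2}$ and annihilates every ``off-diagonal'' one ($m_1\neq m_2$), exactly as in the proof of \cref{app:qec:eigenvals}; dually, $\Lambda^\dagger$ sends $\ket{1,i_1}\bra{1,i_2}$ to $\sum_m\ket{m,i_1}\bra{m,i_2}$ and annihilates everything else. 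Composing, $\Lambda^\dagger\circ\Lambda$ kills the $4^n-2^{n+k}$ off-diagonal basis operators, and on each of the $4^k$ blocks indexed by a pair $(i_1,i_2)$ it maps every $\ket{m,i_1}\bra{m,i_2}$ to $\sum_{m'}\ket{m',i_1}\bra{m',i_2}$.

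The spectrum is then immediate. Restricted to a fixed $(i_1,i_2)$ block, $\Lambda^\dagger\circ\Lambda$ equals $2^{n-k}$ times the rank-one projector onto the unit vector $2^{-(n-k)/2}\sum_m\ket{m,i_1}\bra{m,i_2}$, contributing one eigenvalue $2^{n-k}$ and $2^{n-k}-1$ zeros; summing the $4^k$ blocks and the off-diagonal kernel gives eigenvalue $2^{n-k}$ with multiplicity $4^k$ and eigenvalue $0$ with multiplicity $(4^n-2^{n+k})+4^k(2^{n-k}-1)=4^n-4^k$. Taking square roots yields the claimed singular values $2^{(n-k)/2}$ with multiplicity $4^k$ and $0$ with multiplicity $4^n-4^k$; as a sanity check this is consistent with $\Lambda\circ\Lambda=\Lambda$ having rank $4^k$ (so exactly $4^k$ nonzero singular values) and with the trivial code $n=k$ where $\Lambda$ is the identity. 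I expect the only delicate point to be the syndrome bookkeeping behind $\Pi_1 P_{m'}P_m\Pi_1=\delta_{m,m'}\Pi_1$ --- keeping track of coset orthogonality and the involutivity of the $P_m$ --- but this is precisely the structure already established around \cref{eq:cospaceortho}, so no genuinely new ingredient is required.
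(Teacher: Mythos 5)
Your proof is correct, and it reaches the result by a different route than the paper at the decisive step. Both arguments start identically: rewrite the channel via \cref{eq:recoveryalt} as $\Lambda(\rho)=\sum_m \Pi_1 P_m\rho P_m\Pi_1$, form the Hilbert--Schmidt adjoint, and compute $\Lambda^\dagger\circ\Lambda(\rho)=\sum_{m,m'}P_{m'}\Pi_1P_m\rho P_m\Pi_1P_{m'}$, with the orthogonality relation \cref{eq:cospaceortho} (equivalently your identity $\Pi_1P_{m'}P_m\Pi_1=\delta_{m,m'}\Pi_1$) doing the algebraic work. The paper then proceeds structurally: it rescales to $\bar{\Delta}=\Lambda^\dagger\Lambda/2^{n-k}$, shows $\bar{\Delta}$ is unital and idempotent, and counts the multiplicity of the $+1$ eigenvalue by invoking the theorem that fixed points of a unital channel commute with all its Kraus operators, which forces the block-diagonal form with identical blocks and dimension $4^k$. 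You instead diagonalize explicitly in the orthonormal operator basis $\{\ket{m_1,i_1}\bra{m_2,i_2}\}$ of \cref{eq:cospacebasis}: $\Lambda^\dagger\Lambda$ annihilates the off-diagonal operators and acts as the all-ones matrix $J_{2^{n-k}}$ on each of the $4^k$ diagonal $(i_1,i_2)$ blocks, immediately giving eigenvalue $2^{n-k}$ once per block and zeros elsewhere, with the correct bookkeeping $(4^n-2^{n+k})+4^k(2^{n-k}-1)=4^n-4^k$. Your version is more elementary and self-contained (no commutant/fixed-point theorem, no separate idempotency computation) and has the bonus of exhibiting the right-singular vectors $2^{-(n-k)/2}\sum_m\ket{m,i_1}\bra{m,i_2}$ explicitly; the paper's version buys a structural statement of independent interest, namely that the rescaled map $\bar{\Delta}$ is itself a unital idempotent channel, which is what its multiplicity count is hung on. The only point to state carefully in a polished write-up is the one you flagged: for $m\neq m'$ the product $P_{m'}P_m$ maps the code space into the subspace with syndrome $\vec{\nu}_m\oplus\vec{\nu}_{m'}\neq\vec{0}$, which is orthogonal to $\Pi_1$, and $P_m^2=\mathbb{I}$ handles $m=m'$; this is exactly the content already established around \cref{eq:correrrchi} and \cref{eq:cospaceortho}, so no new ingredient is needed.
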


\begin{proof}
Singular values of the channel $\Lambda$ are given by the positive square root of the eigenvalues of the map $\Delta$, where $\hat{\Delta} := \hat{\Lambda}^\dagger \hat{\Lambda}$ following Eq.~\ref{eq:channelsv}.
This proof proceeds by first showing that $\Delta$ can be rescaled to some $\bar{\Delta} := \frac{1}{2^{n-k}}\Delta$, which is a Hermitian, unital channel that is also idempotent. 
This means $\bar{\Delta}$ has only 0, 1 in its eigenspectrum and no defective eigenvalues. Consequently, the eigenvalues of $\Delta$ are $2^{n-k}$ and 0. 
Finally, the fact that  $\bar{\Delta}$ is unital also lets us count the multiplicities of these eigenvalues thereby completing the proof.

Given the alternative form of $\Lambda$ in Eq.~\eqref{eq:recoveryalt}, the action of $\Delta$ is given by
\begin{equation}
    \Delta(\rho) = \sum_{m_1,m_2=1}^{2^{n-k}}P_{m_1}\Pi_1P_{m_2}\rho P_{m_2}\Pi_1 P_{m_1}.
\end{equation}
The rescaled channel $\bar{\Delta}$ is unital as
\begin{align}
    \bar{\Delta}(\mathbb{I})& = \frac{1}{2^{n-k}}\sum_{m_1,m_2}P_{m_1}\Pi_1P_{m_2}\mathbb{I} P_{m_2}\Pi_1 P_{m_1} = \frac{1}{2^{n-k}}\sum_{m_1,m_2}P_{m_1}\Pi_1\Pi_1 P_{m_1}  =  \frac{1}{2^{n-k}}\sum_{m_1,m_2}P_{m_1}\Pi_1 P_{m_1} \nonumber \\
    &=  \sum_{m_1,i} P_{m_1}\ket{1,i}\bra{1,i} P_{m_1} =\sum_{m_1,i} \ket{m_1,i}\bra{m_1,i}  = \mathbb{I},
\end{align}
where $\{ \ket{1,i} \}$and $\{ \ket{m,i} \}$  form an orthonormal basis for the code space and the full Hilbert space, respectively (see Eqs.~\eqref{eq:cospacebasis} and~\eqref{eq:projexp}).

We now characterize the  +1 eigenspace of $\bar{\Delta}$.  To this end we recall the result that any fixed point of a unital channel should commute with all its Kraus operators~\cite[Theorem~4.25]{watrous_theory_2018}.
That is, any operator $M$ such that $\bar{\Lambda}(M) = M$ satisfies
\begin{equation}
    [ P_{m_1}\Pi_1P_{m_2}, M] = \bm{0},\quad \forall\, m_1,m_2 \in\{1,\dots 2^{n-k}\}.
\end{equation}
This condition leads to the following constraints on $M$
\begin{align}
    \bra{m_1,i}M\ket{m_1,j} &= \bra{m_2,i}M\ket{m_2,j},\quad \forall\, m_1,m_2,i,j \\
    \bra{m_1,i}M\ket{m_2,j}& = 0,\quad \forall\, m_1,m_2,i,j,\quad m_1\neq m_2.
\end{align}
Consequently, $M$ takes the form $M = \oplus_l \bar{M}$, where $\bar{M}$ is an arbitrary $2^{n-k}\times 2^{n-k}$ matrix. That is, $M$ is block diagonal with identical blocks when expressed in the basis $\{\ket{m,i}\}$.
The +1 subspace of $\bar{\Delta}$ is spanned by operators of the form above and has dimensionality equal to $4^k$.

Next we show that the only remaining eigenvalue of $\bar{\Delta}$ is 0. We observe that 
\begin{equation}
\bar{\Delta}\circ \bar{\Delta} (\rho) 
= \frac{1}{2^{2(n-k)}}\sum_{m_1,m_2,m_3,m_4}P_{m_3}\Pi_1P_{m_4}\left(P_{m_1}\Pi_1P_{m_2}\rho P_{m_2}\Pi_1 P_{m_1}\right) P_{m_4}\Pi_1P_{m_3},
\end{equation}
and
\begin{equation}
    \Pi_1P_{m_4}P_{m_1}\Pi_1 = \delta_{m_1,m_4}\Pi_1,
\end{equation}
following Eq.~\eqref{eq:cospaceortho}.
Combining the above two relations, we get
\begin{equation}
    \bar{\Delta}\circ \bar{\Delta} (\rho) 
= \frac{1}{2^{2(n-k)}}\sum_{m_1,m_2,m_3,m_4}\delta_{m_1,m_4}P_{m_3}\Pi_1P_{m_2}\rho P_{m_2}\Pi_1P_{m_3} = \frac{1}{2^{n-k}}\sum_{m_2,m_3}P_{m_3}\Pi_1P_{m_2}\rho P_{m_2}\Pi_1P_{m_3} = \bar{\Delta}(\rho),
\end{equation}
as needed. Therefore, the remaining eigenvalue of $\bar{\Delta}$ is 0 with multiplicity $4^n-4^k$.
\end{proof}

We now introduce noise empirically to examine singular values of the three-qubit code and the five-qubit error correcting codes, under the limiting case of maximal noise strengths. The asymptotes for singular value multiplicities for an ensemble of random channels in Fig 3. (d), (f) and (h) in the main text are compared with deterministic channels ($\epsilon' \equiv 1$). Further, singular value multiplicities are examined for the  five qubit ($[5,1,3]$) code which, unlike the three-qubit ($[3,1,1]$) code, can correct an arbitrary single-qubit error \cite{eczoo_stab_5_1_3, eczoo_quantum_repetition}.

Singular values for both codes under different noise models is presented in \cref{tab:3Qcode:singularvalues,tab:5Qcode:singularvalues}. A common feature in both is that the bare QEC channel without any noise has only one non-trivial singular value $|\mathcal{G}|$ with multiplicity $4^k, k=1$. Under well-behaved single qubit errors that are assumed correctable by each code, an additional singular value at unity is observed.  In all other strong noise limits,  singular values and their multiplicities decay  until there is only one non-zero singular value with multiplicity one. Since this non zero value is not unity, the combined QEC and noise channel remains non-unital in the strong noise limit. 

\begin{table}
    \centering
    \begin{tabular}{p{40mm}|p{30mm}|p{30mm}}
        Singular values \newline repeated $m$ times (3 s.f.)        & No code & Noisy QEC channel (3Q) \\
        \hline
        No noise                & N/A &  $0.00, m=60$ 
                                    \newline $2.00, m=4$\\
        \hline 
        1Q Paulis               & $0.00, m=24$ 
                                \newline $0.50, m=32$ 
                                \newline $1.00, m=8$ & 
                                $0.00, m=60$ 
                                \newline $1.00, m=2$
                                \newline $2.00, m=2$\\
        
        \hline 
        3Q  Pauli group         & $0.00, m=63$ 
                                \newline $1.00, m=1$ & 
                                $0.00, m=63$ 
                                \newline $2.00, m=1$\\
                                
                                    
        \hline 
        Amp. Damp. ($p=1$) 
                                & $0.00, m=63$ 
                                 \newline $2.83, m=1$ &
                                 $0.00, m=63$ 
                                 \newline $2.83, m=1$
                                 \\
        \hline 
    \end{tabular}
    \caption{Singular values with multiplicity $m$ for a channel representing a 3 qubit QEC code with a single round of stabilizer measurements composed with a noise channel.}
    \label{tab:3Qcode:singularvalues}
\end{table}

\begin{table}
    \centering
    \begin{tabular}{p{40mm}|p{30mm}|p{30mm}}
        Singular values \newline repeated $m$ times (3 s.f.) & No code & Noisy QEC channel (5Q)\\
        \hline
        No noise                    & N/A &  $0.00, m=1020$ 
                                    \newline $4.00, m=4$\\
        \hline
        
        1Q Paulis
                                & $0.00, m=405$ 
                                \newline $0.25, m=513$ 
                                \newline $0.50, m=90$ 
                                \newline $0.75, m=15$ 
                                \newline $1.00, m=1$ & 
                                $0.00, m=1020$ 
                                \newline $1.00, m=3$
                                \newline $4.00, m=1$
                                \\
        \hline 
        5Q Pauli group        & $0.00, m=1023$ 
                                \newline $1.00, m=1$ & 
                                $0.00, m=1023$ 
                                \newline $4.00, m=1$
                                \\ 
        \hline 
        Amp. Damp. ($p=1$) 
                                & $0.00, m=1023$ 
                                 \newline $5.66, m=1$ & 
                                 $0.00, m=1023$ 
                                 \newline $4.12, m=1$\\
        \hline 
    \end{tabular}
    \caption{Singular values with multiplicity $m$ for a channel representing a 5 qubit QEC code with a single round of stabilizer measurements composed with a noise channel.}
    \label{tab:5Qcode:singularvalues}
\end{table}

\section{ Random channels \label{app:ergodicqc}} 
 
We characterize singular and eigenspectral distributions of three important families of random quantum channels whose Kraus terms (\cref{eqn:krausrep2}). These families have the convenient property that randomness is defined as sampling uniformly from a relevant matrix group. Aside from the case $\kappa=1$ (unitary case), the notion of a spectral gap is well defined and we use the machinery of the previous section to provide upper bounds and estimate lower bounds for this spectral gap with concentration. 

\begin{figure}
    \centering
    \includegraphics{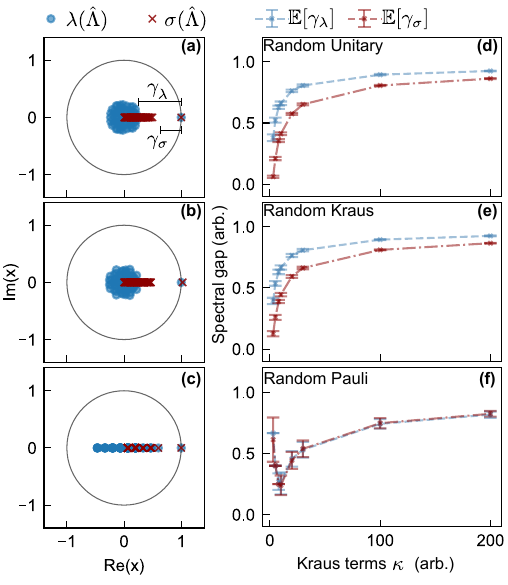}
    \caption{Spectral characteristics of random quantum channels for $d=2^3$, with rows corresponding to random unitary, random Kraus and random Pauli maps. Left panels (a)-(c): A single instance of a random quantum channel for arbitrary $\kappa=15$. Eigenvalues (blue) lie inside a complex unit disc, while singular values (red) lie between $[0,1]$ for these channels. Note that random Kraus channels are not normal nor unital; random unitary maps are not normal but are unital, and random Pauli maps are both normal and unital. A spectral gap for eigen- ($\gamma_\lambda$) and singular spectra ($\gamma_\sigma$) is defined. Right panels (d)-(f): Expected value of eigenspectral ($\gamma_\lambda$, blue) and singular spectral gaps ($\gamma_\sigma$, red) vs. increasing Kraus terms $\kappa$; in (b), the numerical quantity $\gamma_\sigma^+$ represents small approximation errors in Kraus normalization and decays to zero with increased averaging.  }
    \label{fig:raw_spectra}
\end{figure}

\begin{figure}
    \centering
    \includegraphics{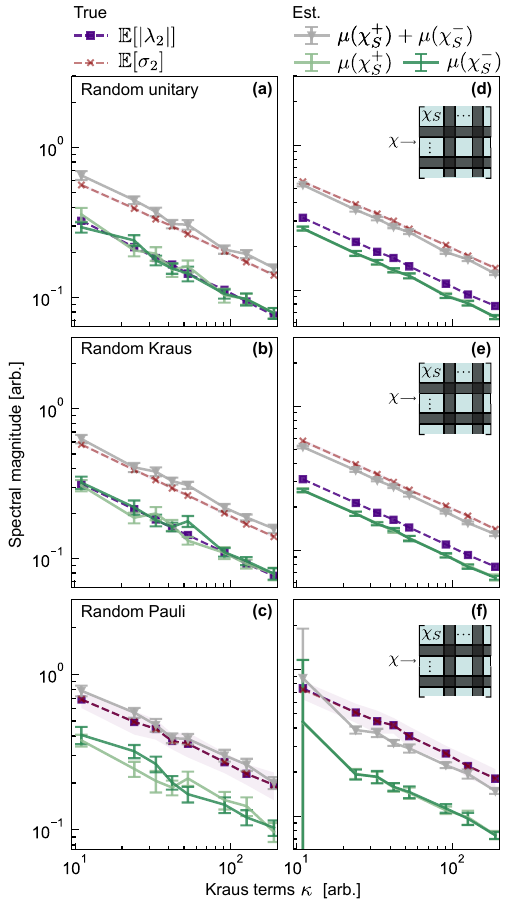}
    \caption{Upper and lower spectral bounds for true $\sigma_{2}(\hat{\Lambda})$ (red dashed) for $d=2^3$ dimensional random unitary, Kraus and Pauli quantum channels (rows) vs. Kraus terms, $\kappa$ on the $x$-axis. True  $|\lambda_2(\hat{\Lambda})|$ (indigo dashed) coincides with singular value $\sigma_{2}(\hat{\Lambda})$ only for Pauli channels. Leftmost column (a)-(c) represents the dilation $\chi(\mathbb{I} - T_p)$ of the nilpotent eigenspectrum where peripheral values of the original channel are projected out, such that Chernoff $\mu(\chi_S^+) + \mu(\chi_S^-)$ (gray solid) is an upper bound to true $\sigma_{2}(\hat{\Lambda})$ (red dashed). Rightmost column (d)-(f) additionally removes one randomly selected row/column from the dilation $\chi(\mathbb{I} - T_p)$ (setting $l_{2}=1, k=1$ in \cref{eqn:interlacing}) such that Chernoff $\mu(\chi_S^+) + \mu(\chi_S^-)$ (gray solid) interlaces to instead be a lower bound to true $\sigma_{2}(\hat{\Lambda})$ (red dashed). Empirically, the terms $\mu(\chi_S^\pm)$ (dark, light green solid) appear to lower bind the second largest absolute eigenvalue of the original channel $|\lambda_2(\hat{\Lambda})|$, even for non-normal random unitary and Kraus maps. For rightmost columns, increasing truncation ($k>1$ in \cref{eqn:interlacing}) yields looser bounds. Data represents $100$ trials of randomly generated channels for each $\kappa, d$. Individual Kraus operations  sampled from Haar unitary (a),(d), complex Ginibre (b),(e) and $3$-qubit Pauli group (c),(f) normalized to satisfy trace preservation and unique eigenvalues detected with numerical tolerance $1e^{-6}$; error bars represent one sample standard deviation.} 
    \label{fig:chernoff}
\end{figure}
\clearpage
These three groups are: the Haar ensemble of unitaries, the complex Ginibre ensemble of square matrices, and the Pauli group. The practical relevance of considering these groups stems, as examples, from the following: Haar unitary channels represent a limiting distribution for large quantum circuits with random parametrization \cite{Mele2024May}; the Ginibre ensemble represents one way to sample the space of random CPTP channels \cite{bruzda_random_2009, fischmann_induced_2012}, and Pauli channels are omnipresent in the study of noise-robust quantum computations.

To illustrate what spectral distributions can look like for these simple families, in \cref{fig:raw_spectra} we plot spectral distributions of random unitary channels, random Kraus channels \cite{kukulski_generating_2021} and random Pauli channels. An instance of the spectrum of the superoperator for a random channel is plotted in (a)-(c), with $d^2$ complex eigenvalues inside the unit circle as anticipated (blue circles). We find singular values within the real interval $[0,1]$ (red crosses), where a singular spectral norm of unity is anticipated for unital channels \cite{watrous_theory_2018}. Random unitary maps are not normal, but are unital. Meanwhile, superoperators for random Kraus maps do not appear to be normal nor unital; and random Pauli maps are both normal and unital. Furthermore, all ensembles of these maps for high $\kappa \gg 1$ appear to have a single peripheral eigenvalue at unity, with multiplicity one. (Without considering eigenvectors for $\lambda_1=1$, we cannot claim ergodicity or mixing.) 

We additionally able to define the notion of a spectral gap, $\gamma_\lambda, \gamma_\sigma$, which is the distance between the set of values whose magnitude match the spectral radius, and the next largest (absolute) value \cite{wolf_tour_2012}. For the random channels in \cref{fig:raw_spectra}, this reduces to the distance from unity. The quantities $\gamma_\lambda, \gamma_\sigma$ (blue, red) are plotted in (d)-(f) as a function of $\kappa$, with dimensions fixed at $d=2^3$. For Pauli channels (c), (f) we see normality ensures that singular values are absolute values of eigenvalues. Indeed this equivalence does not hold for non-normal channels in (d),(e). For random Kraus maps in \cref{fig:raw_spectra}(d), it is conjectured that once the $\lambda_1=1$ value is removed via the projection $\mathbb{I} - T_p$, then the remaining matrix has a spectral radius of order $1/\kappa$ \cite{kukulski_generating_2021}. However, we additionally find that spectral radius scaling of nilpotent spectrum is of order $1/\kappa$ not just for random Kraus maps, but also random unitary maps and random Pauli maps for large $\kappa$. Finally, we find that average eigenspectral gap lies below the singular spectral gap for all $\kappa$ for both random unitary and Kraus maps. These observations are empirical and we leave it as an open question to understand their theoretical origin.

We now estimate a concentrated upper bound for the second largest singular value $\sigma_{2}(\hat{\Lambda})$. For the random channels considered above, this $\sigma_{2}(\hat{\Lambda}) \geq \lambda_{2}(\hat{\Lambda})$ and directly lower bounds the spectral gap. In the high $\kappa$ regime, we may first project out the peripheral spectrum of the original channel $\hat{\Lambda}$ by using projectors $\mathcal{N}$ from the decomposed Jordan normal form in \cref{eqn:nilpotent}. Namely, \cref{th:mainresult} is applied to a complex matrix $\Gamma:= \mathcal{N}T\mathcal{N}^\dagger$ that is similar to the nilpotent spectrum of the original channel, and the corresponding dilation $\chi= \begin{bmatrix} 0 & \Gamma \\ \Gamma^\dagger & 0 \\ \end{bmatrix}$ is used in \cref{th:mainresult}. This dilation has a maximal eigenvalue which concentrates and upper binds the singular values associated with nilpotent projection of $\hat{\Lambda}$. (For normal channels, we could directly claim that this procedure upper binds the second largest absolute eigenvalue, and therefore lower binds the eigenspectral gap.)

In \cref{fig:chernoff}(a)-(c), Chernoff fluctuation term from \cref{eqn:chernoffmu}, $\mu(\chi^+) + \mu(\chi^-)$ (grey solid) is plotted for an increasing number of Kraus terms $\kappa$, to be compared with true singular value $\sigma_{2}(\hat{\Lambda})$ (red dashed) and the true second largest eigenvalue $|\lambda_{2}(\hat{\Lambda})|$ (indigo dashed). In all panels \cref{fig:chernoff} (a)-(c), we indeed find that the Chernoff fluctuation terms upper bounds the true second largest singular value $\sigma_{2}(\hat{\Lambda})$ (red dashed) of the original random quantum channel. Meanwhile the contributions of the individual Chernoff terms $\mu(\chi^\pm)$ (light, dark green) are symmetric for postive and negative parts of real spectrum of a Hermitian matrix $\chi$ in the leftmost panels.

In the rightmost panels \cref{fig:chernoff}(d)-(f), we additionally estimate a lower bound for $\sigma_{2}(\hat{\Lambda})$ (red dashed) by invoking the interlacing property via \cref{th:mainresult}. One may delete a single randomly selected row/column in the above dilation, $\chi= \begin{bmatrix} 0 & \Gamma \\ \Gamma^\dagger & 0 \\ \end{bmatrix} \to \chi_S$ for $k=1$ in \cref{eqn:interlacing} such that the maximal eigenvalue of the submatrix is a lower bound to the maximal eigenvalue of $\chi$. We numerically confirm this in \cref{fig:chernoff} (d)-(f), where the net Chernoff fluctuation (gray solid) lies just below the true value $\sigma_{2}(\hat{\Lambda})$ (red dashed), i.e. reversing the comparison in (a)-(c). There is some level of tunability in the lower bound, for example, by choosing more than one row/column to delete. Meanwhile we find that the individual terms $\mu(\chi^\pm)$ (light, dark green) in panels (d)-(f) appear to lie below the true $|\lambda_{2}(\hat{\Lambda})|$, even for non-normal superoperators of random channels. While establishing lower bounds is out of scope for \cref{th:mainresult}, theoretical results to support empirical observations in this direction may more useful in practice. 

%

\end{document}